\let\revappendix\appendix
\crefname{enumi}{}{}
\theoremstyle{definition} 
\newtheorem{theorem}{Theorem}[section]            
\newtheorem{proposition}{Proposition}[section]    
\newcommand{\talpha}{\tilde{\alpha}}
\newcommand{\balpha}{\bar{\alpha}}
\newcommand{\sgn}{\text{sgn}}
\begin{document}
\title{Diversity of hysteresis in a fully cooperative coinfection model}

\author{Jorge P. Rodr\'iguez}
    \affiliation{Instituto de F\'isica Interdisciplinar y Sistemas Complejos IFISC, CSIC-UIB, Palma de Mallorca, Spain}	
	
\author{Yu-Hao Liang}
\author{Jonq Juang}
    \email{jjuang@math.nctu.edu.tw}
    \affiliation{Department of Applied Mathematics, National Chiao Tung University, Hsinchu, Taiwan}

\begin{abstract}
    We propose a fully cooperative coinfection model in which singly infected individuals are more likely to acquire a second disease than those who are susceptible, and doubly infected individuals are also assumed to be more contagious than those infected with one disease. The dynamics of such fully cooperative coinfection model between two interacting infectious diseases is investigated through well-mixed and network-based approaches. We show that the former approach exhibits three types of hysteresis, namely, $C$, $S_l$ and $S_r$ types, where the last two types have not been identified before. The first (resp., the second and the third) type exhibits (resp., exhibit) discontinuous outbreak transition from the disease free (resp., low prevalence) state to the high prevalence state when a transmission rate crosses a threshold from the below. Moreover, the third (resp., the first and the second) type possesses (resp., possess) discontinuous eradication transition from the high prevalence state to the low prevalence (resp., disease free) state when the transmission rate reaches a threshold from the above. Complete characterization of these three types of hysteresis in term of parameters measuring the uniformity of the model is also provided. Finally, we assess numerically this epidemic dynamics in random networks.
\end{abstract}

\pacs{05.90.+m, 87.10.-e}

\keywords{Coinfection, bistability, hysteresis, multiple epidemic equilibria}

\maketitle

\date{\today}

\section{INTRODUCTION}

  Mathematical modelling has become a fundamental tool to understand the basic features of disease spreading in a group \cite{Hethcote2000}. We can deal with this problem with two approaches: considering a well-mixed system and approximating it with an interaction all-to-all, {\it i.e.,} the mean-field approximation, or taking into account a topology that specifies who interacts with whom. Hence, the theory of both dynamical systems and complex networks \cite{Newman2003} is fundamental for this kind of studies. Two main models have been proposed for the dynamics of a spreading disease: one, considering three kinds of individuals, where susceptibles (S), when exposed to the disease, become infective (I), and afterwards become recovered (R), getting immunized against the disease, is known as SIR model \cite{KermacK1927}; the other model represents diseases that are spread by pathogens that frequently mutate, such that individuals do not get immunized, but come back to susceptible state after the infective phase (SIS model) \cite{Anderson1992Book}. In fact, those models have been proposed not only in the context of disease spreading, but also for other spreading processes, such as computer viruses epidemics \cite{Pastor-Satorras2001A}.

  Most studies on disease spreading focus on one disease infecting a population \cite{Pastor-Satorras2001B,Moreno2002,Newman2002}. However, different pathogens may interact with each other on the host. For instance, there are cases in which infection with one disease strengthens a host's immunity against other infection \cite{Newman2005,Ahn2006,Funk2010C,Marceau2011,Karrer2011,Poletto2013,Sahneh2014}. In those competitive systems, regimes such as extinction, coexistence and one strain dominance have been identified. On the other hand, it has also been reported that people with HIV infections are more likely to get other infections such as hepatitis or tuberculosis \cite{Alter2006,Abu-Raddad2006,Pawlowski2012,WHO2014}. Such cooperative disease spreading models have recently been studied in Refs.\ \cite{Hethcote2000,Pastor-Satorras2001A,Newman2005,Ahn2006,Alter2006,Abu-Raddad2006,Colizza2006,Barrat2008Book,Funk2010C,Marceau2011,Karrer2011,Pawlowski2012,Poletto2013,Chen2013,WHO2014,Sahneh2014,Sanz2014,Cai2015,Grassberger2016,Chen2016}. Interesting dynamics, like the appearance of abrupt transitions, in contrast to the typical second order phase transition found in single disease spreading model, have been reported \cite{Cai2015,Grassberger2016,Chen2016}. Specifically, for SIS dynamics, there is hysteresis behaviour: the models exhibit discontinuous outbreak transition from disease free state to the high prevalence state as a transmission rate crosses a threshold $\alpha_o$ from the below, while in the reverse scenario, they exhibit discontinuous eradication transition from the high prevalence state to disease free state as the transmission rate reaches a threshold $\alpha_e\, (< \alpha_o)$ from the above is also observed. The above described phase transition is to be called a \emph{hysteresis of $C$ type}. At $\alpha= \alpha_o$, we see that the size of the infected population has a discontinuous jump, indicating the outbreak of the disease. The difference in $\alpha_e$ and $\alpha_o$ points out that, once the outbreak of the disease occurs, driving down the transmission rate to $\alpha_o$ is not enough to eradicate the diseases. In fact, we need to drive the rate further down to $\alpha_e$ for the epidemic to die out, which requires more effort and results in greater economic costs. Note that, in the cooperative coinfection models considered in Refs.\ \cite{Chen2013,Cai2015,Grassberger2016,Chen2016}, it was only assumed that a singly infected individual is more likely to get infected with a second disease than a susceptible individual.

  In this work, we consider a fully cooperative coinfection model for two diseases, in which we further assume that a doubly infected individual is more contagious than a singly infected individual; see \cref{fig:General_Scheme}. Hence, diseases cooperate not only when one individual is getting infected, but also when they meet in one host and start infecting other. We observe two additional types of hysteresis. In particular, the model has discontinuous outbreak transition from the low prevalence state to the high prevalence state as the transmission rate crosses a threshold $\alpha_o$ from the below. Moreover, a discontinuous eradication transition from the high prevalence state to the disease free state or the lower prevalence state can both be observed as the transmission rate reaches a threshold $\alpha_e\, (< \alpha_o)$ from the above. They are to be called \emph{hysteresis of $S_l$ and $S_r$ types}, respectively; see \cref{fig:Hysteresis_Loops}. It is worthwhile to point out that, for model exhibiting hysteresis of $S_l$ or $S_r$ types, it must exist two stable endemic fixed points in some range of transmission rates as contrast to the existence of at most one stable endemic fixed point for all range of transmission rates in the case of hysteresis of $C$ type.

  To depict the above mentioned dynamics, we first consider the model through the well-mixed approach. Three positive quantities $C$, $a$, $b$, called the relative infection factor, relative coinfection factors for susceptible class and singly infected class, respectively, are to be introduced to measure the uniformity of the model; see \cref{fig:General_Scheme}. When $C=a=b=1$, the model is said to be uniform and corresponds to noninteracting disease spreading model. The larger deviation of these quantities from $1$, the more nonuniform the model is. The case that $\min \{ C,a,b \} >1$ (resp., $\max \{ C,a,b \} <1$) corresponds to a fully cooperative (resp., competitive) coinfection model. If $\min \{ C,a,b \} < 1< \max \{ C,a,b \}$, it is called a mixed type. In Ref.\ \cite{Chen2016}, the authors consider a partially cooperative model, \textit{i.e.}, $C>1$ and $a,b=1$, where the hysteresis of type $C$ was found provided that $C$ is sufficiently large. In this work, for sufficiently large $C$ and any $a,b\geq 1$, the model behaves like the one considered in Ref.\ \cite{Chen2016}. In case that $b-a> 1/2$ and $C$ in an intermediate range, hysteresis of types $S_l$ and $S_r$ emerge. Subsequently, we investigate this problem through the network-based approach.

\begin{figure}[htp]
    \subfloat[\label{fig:General_Scheme}]{ \includegraphics[scale=1]{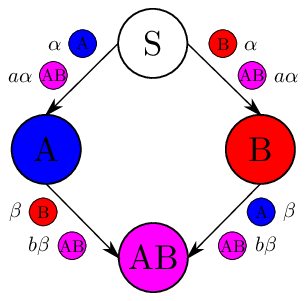} }  
             \hspace*{6pt}
    \subfloat[\label{fig:Hysteresis_Loops}]{ \includegraphics[scale=1]{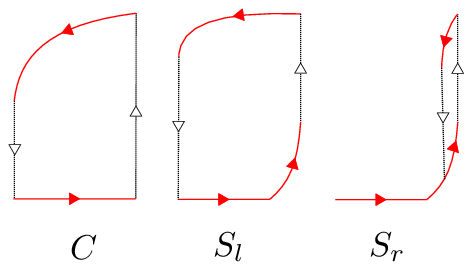} } 
    \caption{ (a) General scheme of the dynamics where, under the exposure to infected individuals, the transmission rates from one state to others are specified. (b) Hysteresis loops found under the considered dynamics.}
    \label{fig:scheme}
\end{figure}

The organization of the paper is as follows. In \cref{sec:Well-Mixed MODEL}, the well-mixed approach for our fully cooperative coinfection model is introduced and analytically studied. In \cref{sec:STOCHASTIC_NETWORK_MODEL}, we describe our stochastic network model and the numerical simulation results are also provided. The summary of our obtained results and the future work are given in \cref{sec:CONCLUSION}. All proofs of our results are recorded in \cref{sec:Appendix_Existence_of_bar_alpha}.

\section{Well-Mixed approach}\label{sec:Well-Mixed MODEL}

We study a well-mixed particle system for the fully cooperative coinfection dynamics with two diseases, A and B. We define $S$, $I_{A}$, $I_{B}$ and $I_{AB}$ as, respectively, the fractions of susceptible individuals and the ones infected with disease A, B and AB in the population. Susceptible individuals get a primary infection from a singly infected with a rate $\alpha$. The cooperation between diseases on individuals that are singly infected is translated into a higher rate for getting secondary infection, $\beta > \alpha$. Moreover, the cooperation makes the doubly infected individuals more virulent, such that they infect susceptibles with a rate $a \alpha$, $a \geq 1$ and primary infected with a rate $ b \beta$, $b \geq 1$. The case $a=b=1$ has been previously reported \cite{Chen2016}, implying cooperation for getting a secondary infection. Our case is more general, translating cooperation between the diseases not only for receiving, but also for transmitting diseases. Then the corresponding mean field equations for this dynamical system are:

\begin{equation} \label{eq:Well_Mixed_Model}
    \begin{aligned}
        \dot{S}(t)        &= -\alpha   S (I_A+ a I_{AB})- \alpha  S (I_B+ a I_{AB})+  I_A   + I_B,\\
        \dot{I}_{A}(t)    &=  \alpha   S (I_A+ a I_{AB})- \beta I_A (I_B+ b I_{AB})+  I_{AB}- I_A,\\
        \dot{I}_{B}(t)    &=  \alpha   S (I_B+ a I_{AB})- \beta I_B (I_A+ b I_{AB})+  I_{AB}- I_B,\\
        \dot{I}_{AB}(t)   &=  \beta  I_A (I_B+ b I_{AB})+ \beta I_B (I_A+ b I_{AB})- 2I_{AB}.
    \end{aligned}
\end{equation}
Here the recovery rates for diseases A and B, without loss of generality, are set equal to the time scale, while the basic reaction scheme underlying the dynamical process is given in the first and second columns of \cref{table:Relation_p_alpha}; see also \cref{fig:General_Scheme}. More precisely, $\alpha$ and $\alpha a$ (resp., $\beta$ and $\beta b$) are infection rates for susceptible class (resp., singly infected class) who get infected by singly infected and doubly infected
class (resp., the complementarily infected class and doubly infected class), respectively. The quantity $a$ (resp., $b$) is the ratio between the infection rates of the doubly infected class and singly infected class for infecting the susceptible class (resp., singly infected class). The larger $a$ and $b$ are, the more contagious the doubly infected class AB is. We call $a$ and $b$ the \emph{relative coinfection factors} for susceptible class and singly infected class, respectively. The quantity $C := \beta / \alpha$, which is termed the \emph{relative infection factor}, is the ratio of the infection rates for singly infected class and susceptible class. The larger $C$ is, the easier singly infected class gets infected with a secondary infection. Note that $C$, $a$ and $b$ measure the degree of uniformity in fully cooperative coinfection models. The larger those quantities deviate from $1$, the more nonumiform the model is.

\begin{table}[!hbt]
\begin{tabular}{|l||c|c|}
  \hline
  \multirow{2}{*}{Transmission} & \multirow{2}{78pt}{\hspace{18pt}Well-mixed \hspace*{28pt}  (rate)}%
   & \multirow{2}{78pt}{Stochastic network \hspace*{12pt} (probability) \hspace{4pt}} \\[1pt]
   &  & \\
  \hline
  $S \xrightarrow{A}  A$     & $\alpha$                & $p$\\
  $S \xrightarrow{B}  B$     & $\alpha$                & $p$ \\
  $S \xrightarrow{AB} A$     & $a\alpha$               & $1-(1-p)^a$ \\
  $S \xrightarrow{AB} B$     & $a\alpha$               & $1-(1-p)^a$ \\[3pt]
  $A \xrightarrow{B}  AB$    & $\beta\, (=C\alpha)$    & $1-(1-p)^C$ \\
  $B \xrightarrow{A}  AB$    & $\beta\, (=C\alpha)$    &  $1-(1-p)^C$ \\
  $A \xrightarrow{AB} AB$    & $b\beta\, (=bC\alpha)$  & $1-(1-p)^{bC}$\\[3pt]
  $B \xrightarrow{AB} AB$    & $b\beta\, (=bC\alpha)$  & $1-(1-p)^{bC}$  \\
  \hline
\end{tabular}
\caption{Basic reaction scheme for the disease spreading in the well-mixed model and stochastic network models.
         \label{table:Relation_p_alpha}}
\end{table}

The dynamics of this model is depicted by its fixed points. Assuming a symmetric behaviour between the diseases A and B, the endemic fixed point $E^*:= (S^*, I_A^*, I_B^*, I_{AB}^*)$ of \eqref{eq:Well_Mixed_Model} satisfies
\begin{equation}\label{eq:endemic_fixed_pt_Is}
    \begin{aligned}
        I_A^*     = I_B^* &= \frac{a \alpha S^*}{1+(2a-1)\alpha S^*} (1-S^*),\\
        I_{AB}^* &= \frac{1-\alpha S^*}{1+(2a-1)\alpha S^*} (1-S^*),
    \end{aligned}
\end{equation}
where $S^*$, satisfying $0\leq S^* < 1$ and $0\leq \alpha S^* \leq 1$, is a solution of
\begin{equation}\label{eq:endemic_fixed_pt_S}
    \begin{aligned}
        f(S, \alpha) := &a(b-a) C \alpha^3 S^3\\
                 &+[ \alpha a(a-b)C - abC+ 2a- 1 ] \alpha^2 S^2\\
                 &+(\alpha abC -2a+ 2) \alpha S -1 =0.
    \end{aligned}
\end{equation}

For any $a,b\geq 1$ and $C>0$, Eq.\ \eqref{eq:endemic_fixed_pt_S} defines a solution curve $(S, \balpha(S))$, $0< S\leq 1$, satisfying $f(S, \balpha(S))= 0$. In particular, for any fixed $a,b\geq 1$, $C>0$ and $0< S\leq 1$, there exists a unique $\balpha\, (= \balpha(S))$ on $(0, 1/S]$ such that $f(S, \balpha(S))= 0$ (see \cref{prop:exist_func_balpha} in \cref{sec:Appendix_Existence_of_bar_alpha}). Our goal next is to study the graph the function $\balpha(S)$, $0< S\leq 1$. We address it by making the change of variable
\begin{equation}\label{eq:defi_T}
    T= \alpha S.
\end{equation}
Clearly, by \eqref{eq:endemic_fixed_pt_S},
\begin{equation}\label{eq:defi_tilde_alpha}
    \begin{aligned}
        \bar{\alpha}(S) &= \frac{a(a-b)C T^3+ (abC-2a+1) T^2+ 2(a-1)T+ 1}{aCT [(a-b)T+ b]}\\
                        &=: \talpha(T).
    \end{aligned}
\end{equation}
Here $0< T \leq 1$. Using \cref{prop:relation_balpha_talpha} in \cref{sec:Appendix_Existence_of_bar_alpha}, we have that the monotonicity structure of functions $\balpha(S)$ and $\talpha(T)$ on the interval $(0,1]$ is identical. Hence, to study the dependence of the endemic fixed points on the parameters, it suffices to graph the function of $\talpha(T)$. The detailed analysis of the graph of $\talpha(T)$ is provided in \cref{prop:diff_talpha_endpt,prop:b_less_than_b_c,prop:defi_C1_C_12,prop:talpha_property_various_C} in \cref{sec:Appendix_Existence_of_bar_alpha}. We summarize the results in the following theorem.

\begin{theorem}\label{thm:main_results}
    For parameters $a$, $b$, $C\geq 1$, $\lim\limits_{T\rightarrow 0^+} \talpha(T)= \infty$ and $\talpha(1)= 1$. Define $C_2\, (=C_2(a,b))$ by
    \begin{equation}\label{eq:defi_C_2}
        C_2= \frac{2}{a}.
    \end{equation}
    Then the following assertions \cref{itm:b_less_than_b_c,itm:b_greater_than_b_c} hold.
    \begin{enumerate}[label=(\Roman*),ref=(\Roman*)]
        \item\label{itm:b_less_than_b_c}
            (See \cref{fig:TAlpha_Func_b_1p25,fig:EvoBifCurve_b_1p25})
            For $b\leq  b_c\, (:=a+ 1/2)$, $\talpha(T)$ is concave upward on $(0,1]$. Moreover, the following two assertions hold.
            \begin{enumerate}[label=(\Roman{enumi}-\roman*),ref=(\Roman{enumi}-\roman*)]
                \item For $C\leq C_2$, $\talpha(T)$ is strictly decreasing on $(0,1]$. Consequently, the endemic fixed point curve, denoted by $\tilde{S}$, on the $(\alpha, 1-S)$ plane is strictly increasing.
                \item For $C>    C_2$, $\talpha(T)$ has exactly one critical point at some point $T_*$ on $(0,1)$, which corresponds to a local minimum $\alpha_e\, (=\alpha_e(C))$. Moreover, $\alpha_e< \talpha(1) \,(=1)$. Consequently, the corresponding $\tilde{S}$, on the $(\alpha, 1-S)$ plane is $C$-shaped.
            \end{enumerate}
        \item\label{itm:b_greater_than_b_c}
             (See \cref{fig:TAlpha_Func_b_2,fig:EvoBifCurve_b_2}) For $b> b_c$, $\talpha(T)$ has exactly one inflection point $\gamma$ on $(0,1)$. Define $C_1\, (=C_1(a,b))$ and $C_{12}\, (=C_{12}(a,b))$, respectively, by
             \begin{align}
                \mbox{ }\hskip 25pt
                C_1    &= \frac{ \left[ a^{2/3} (2b-1)^{1/3} + (b-a)^{2/3} \right]^3}{ab^3}, \label{eq:defi_C_1}\\
                C_{12} &= \frac{ (2ab-2a+b)+ 2 \sqrt{a (2b-1) (b-a)}}{ab^2}. \label{eq:defi_C_12}
             \end{align}
             Then the following three assertions hold.
             \begin{enumerate}[label=(\Roman{enumi}-\roman*),ref=(\Roman{enumi}-\roman*)]
                \item For $C\leq C_1$, $\talpha(T)$ is strictly decreasing on $(0,1]$. Consequently, the corresponding $\tilde{S}$, on the $(\alpha, 1-S)$ plane is strictly increasing.
                \item For $C_1 < C < C_{2}$, $\talpha(T)$ has a unique local minimum $\alpha_e\, (=\alpha_e(C))$ and maximum $\alpha_o\, (=\alpha_o(C))$ at $T_*$ and $T^*$, respectively, on $(0,1)$ with $T_* < T^*$ such that $\talpha(T)$ is strictly decreasing on $(0, T_*] \cup [T^*, 1]$ while strictly increasing on $(T_*, T^*)$. Moreover,
                        \begin{equation}\label{eq:C_12_relation}
                            \alpha_e (C)
                                \left\{
                                \begin{array}{cl}
                                    > 1 & \text{if } C< C_{12}, \\[1pt]
                                    = 1 & \text{if } C= C_{12}, \\[1pt]
                                    < 1 & \text{if } C> C_{12}. \\[1pt]
                                \end{array}
                                \right.
                        \end{equation}
                      Consequently, the corresponding $\tilde{S}$, on the $(\alpha, 1-S)$ plane is $S$-shaped.
                \item For $C> C_2$, $\talpha(T)$ has exactly one critical point at some point $T_*$ on $(0,1)$, which corresponds to a local minimum $\alpha_e\, (=\alpha_e(C))$. Moreover, $\alpha_e< \talpha(1) \,(=1)$. Consequently, the corresponding $\tilde{S}$, on the $(\alpha, 1-S)$ plane is $C$-shaped.
            \end{enumerate}
    \end{enumerate}
\end{theorem}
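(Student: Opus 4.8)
The plan is to reduce everything to the rational function $\talpha(T)=N(T)/D(T)$ on $(0,1]$, where $N(T)=a(a-b)CT^{3}+(abC-2a+1)T^{2}+2(a-1)T+1$ and $D(T)=aCT[(a-b)T+b]>0$; this is legitimate because the monotonicity structure of $\balpha$ is identical to that of $\talpha$ by \cref{prop:relation_balpha_talpha}. The endpoint values are immediate: as $T\to0^{+}$ the numerator tends to $1$ while $D\to0^{+}$, so $\talpha\to\infty$, and $N(1)=D(1)=a^{2}C$ gives $\talpha(1)=1$. Two identities then drive the analysis. Writing $g:=N'D-ND'$, so that $\talpha'=g/D^{2}$ and $\sgn\talpha'=\sgn g$, a short expansion gives $g(0)=-abC<0$ and $g(1)=a^{3}C(aC-2)$; hence $\talpha'(1)$ vanishes exactly at $C=C_{2}=2/a$ and is negative (resp.\ positive) for $C<C_{2}$ (resp.\ $C>C_{2}$). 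Second, the factorization $N(T)-D(T)=(T-1)\,q(T)$ with $q(T)=a(a-b)C\,T^{2}+(abC-2a+1)T-1$ identifies the level set $\{\talpha=1\}$ as $\{T=1\}\cup\{q=0\}$.

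For the concavity dichotomy I would bypass the degree-five numerator of $\talpha''$ by using a partial-fraction form. Polynomial division yields $\talpha(T)=T+c_{0}+A/T+B/(T-r)$, where $r=b/(b-a)$ is the second root of $D$ (which lies outside $(0,1]$ for all $a,b\ge1$ with $b\neq a$; the case $b=a$ gives $\talpha''=2/(a^{2}CT^{3})>0$ directly), $A=1/(abC)>0$, and $B$ is constant. Then $\talpha''(T)=2A/T^{3}+2B/(T-r)^{3}$, so $\talpha''(T)=0$ is equivalent to $\bigl((T-r)/T\bigr)^{3}=-B/A$; as $u\mapsto u^{3}$ is a bijection of $\mathbb{R}$, this already forces \emph{at most one} inflection point. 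A residue computation gives $-B/A=N(r)$, and evaluating $N(r)=(r-1)q(r)$ with $q(r)=a(1-2b)/(b-a)$ yields the $C$-independent value $N(r)=a^{2}(1-2b)/(b-a)^{2}$. Solving $\bigl((T-r)/T\bigr)^{3}=N(r)$ gives $T_{\mathrm{infl}}=r/\bigl(1-N(r)^{1/3}\bigr)$, and imposing $0<T_{\mathrm{infl}}<1$ collapses, after cubing, to $2a-2b+1<0$, i.e.\ $b>a+\tfrac12=b_{c}$. Since $\talpha''>0$ near $0$, this proves: for $b\le b_{c}$ there is no interior inflection and $\talpha$ is concave upward, while for $b>b_{c}$ there is exactly one inflection point $\gamma\in(0,1)$.

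The monotonicity claims then follow from the shape of $\talpha'$, using $\talpha'(0^{+})=-\infty$ (as $g(0)<0$ and $D^{2}\to0^{+}$). When $b\le b_{c}$, concavity makes $\talpha'$ increasing, so $\talpha'<0$ throughout when $C\le C_{2}$ (strictly decreasing $\talpha$) and has a single zero—a local minimum $\alpha_{e}<\talpha(1)=1$—when $C>C_{2}$; this is \ref{itm:b_less_than_b_c}. When $b>b_{c}$, the unique inflection makes $\talpha'$ unimodal: it rises from $-\infty$ to a maximum at $\gamma$ and then falls, so it has zero, one, or two zeros in $(0,1)$. The borderline between no interior critical point and two is the tangency $\talpha'(\gamma)=0$; because a double zero of $\talpha'$ is simultaneously a zero of $\talpha''$, it occurs precisely at $\gamma=T_{\mathrm{infl}}$. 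Substituting the cube-root expression for $T_{\mathrm{infl}}$ into $g=0$ and solving for $C$ produces the threshold $C_{1}$ of \cref{eq:defi_C_1}. The sign of $g(1)$ then separates the two-critical-point regime $C_{1}<C<C_{2}$ (local minimum $\alpha_{e}$ at $T_{*}$, local maximum $\alpha_{o}$ at $T^{*}$) from the single-minimum regime $C>C_{2}$, giving \ref{itm:b_greater_than_b_c}.

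Finally, to place $\alpha_{e}$ relative to $1$—the distinction between the two $S$-shaped subcases—observe that $\alpha_{e}=\talpha(T_{*})=1$ forces $T_{*}$ to be an interior point of $\{\talpha=1\}$ at which $\talpha$ attains its minimum, hence a double root of $q$. Setting the discriminant $(abC-2a+1)^{2}+4aC(a-b)$ of $q$ to zero gives a quadratic in $C$; writing its radicand $(2a-b-2ab)^{2}-b^{2}(2a-1)^{2}$ as the product $4a(2b-1)(b-a)$ identifies the relevant root as $C_{12}$ in \cref{eq:defi_C_12}, and the fact that $\alpha_{e}$ decreases through $1$ as $C$ increases past $C_{12}$ yields the trichotomy \cref{eq:C_12_relation}. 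The one genuinely laborious step I anticipate is the closed form of $C_{1}$: carrying the cube-root expression for $T_{\mathrm{infl}}$ through $g(T_{\mathrm{infl}})=0$ and simplifying to $\bigl[a^{2/3}(2b-1)^{1/3}+(b-a)^{2/3}\bigr]^{3}/(ab^{3})$ is where the real algebraic work—and the main risk of error—resides.
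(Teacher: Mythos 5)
Your proof skeleton coincides with the paper's (endpoint values and derivative signs, a $C$-independent inflection point $\gamma$, $C_1$ obtained from the fact that $\talpha'(\gamma)$ is affine in $C$ with positive slope, $C_{12}$ obtained from the quadratic factor of the level set $\{\talpha=1\}$, then sign bookkeeping for the four regimes), but your treatment of concavity is genuinely different and, in one respect, sharper. The paper proves $\talpha'''(T)<0$ globally (part (iv) of \cref{prop:diff_talpha_endpt}, via writing $M_3$ as a cubic in $b$ and bounding the discriminant $c_1^2-4c_0c_2$), then reads off a unique inflection from the endpoint signs of $\talpha''$. You instead use the partial-fraction form $\talpha(T)=T+c_0+A/T+B/(T-r)$, $r=b/(b-a)$, so that $\talpha''=0$ becomes $\bigl((T-r)/T\bigr)^3=-B/A$, a strictly monotone equation with at most one root---no third derivative needed---and you get the closed form $T_{\mathrm{infl}}=r/(1-N(r)^{1/3})$ together with the $C$-independent value $-B/A=N(r)=a^2(1-2b)/(b-a)^2$ (I verified the residue computation, the reduction of $0<T_{\mathrm{infl}}<1$ to $b(2b-2a-1)>0$, i.e.\ $b>b_c$, and the case $b=a$). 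This explicit $\gamma$ is something the paper never has: it defines $\gamma$ implicitly by $M_2(\gamma,a,b)=0$ and invokes Maple for \eqref{eq:defi_C_1}, so your route at least makes the closed form of $C_1$ a finite hand computation. Your auxiliary identities all check against the paper: $g(1)=a^3C(aC-2)$ matches $\talpha'(1)=1-2/(aC)$, $N-D=(T-1)q$ reproduces the paper's roots $\hat T$, and the discriminant of $q$ is exactly the paper's $L(a,b,C)$, with the radicand correctly factoring to $4a(2b-1)(b-a)$.

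Two places where assertion substitutes for argument. First, ``exactly one zero of $\talpha''$ in $(0,1)$'' does not by itself give an inflection point: a tangential zero would leave $\talpha$ concave up throughout, and your unimodality claim for $\talpha'$ needs the sign change. This is a one-line patch---either evaluate $\talpha''(1)=-\tfrac{4}{a^2C}(b-a-1/2)<0$ for $b>b_c$ as in the paper, or note that in your formulation the root is transversal because $T\mapsto((T-r)/T)^3$ is strictly monotone. Second, and more substantively, the trichotomy \eqref{eq:C_12_relation}: you assert that $\alpha_e$ decreases through $1$ as $C$ passes $C_{12}$ without justification, you do not rule out the other discriminant root $C_-$ (the paper checks that the double root $\hat T$ is negative at $C_-$ and lies in $(0,1)$ at $C_+$), and you never establish $C_1<C_{12}<C_2$, without which the three branches of \eqref{eq:C_12_relation} need not all be realizable inside the $S$-shaped regime. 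The missing engine is the pointwise monotonicity $\partial\talpha/\partial C<0$ on $(0,1)$, the paper's \eqref{eq:diff_talpha_C}, which upgrades your double-root condition to the clean characterization ``$\talpha>1$ on $(0,1)$ iff $C<C_{12}$, tangency at $C=C_{12}$, dip below $1$ for $C>C_{12}$''; combined with the shape of $\talpha$ in the regime $C_1<C<C_2$ (where $\alpha_e<1$ iff $\talpha<1$ somewhere on $(0,1)$), this yields \eqref{eq:C_12_relation}, and the ordering $C_1<C_{12}<C_2$ follows as in part (iii) of \cref{prop:defi_C1_C_12}. With these short patches your proof closes; the laborious simplification you flag for $C_1$ is precisely the step the paper itself delegates to Maple.
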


\begin{figure}[htp]
    \subfloat[ $(a,b)= (1, 1.25)$ and $C$ varies discretely from $1.5< 1.75< C_2(1, 5/4)\, (=2) < 2.25< 2.5$, which correspond to the five curves in the panel from the right to the left, respectively.  \label{fig:TAlpha_Func_b_1p25}]%
             { \includegraphics[scale=1]{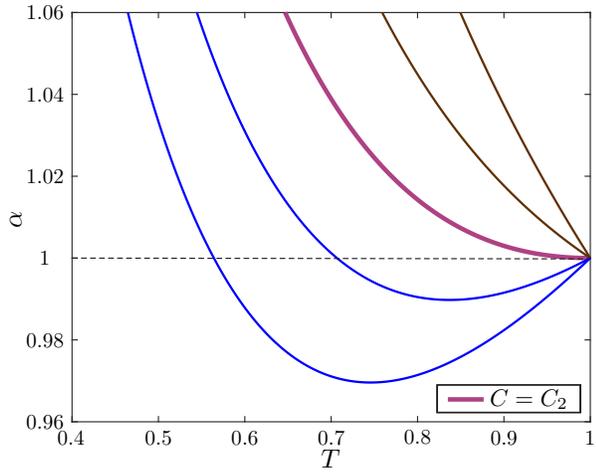} } 
    \\
    \subfloat[ $(a,b)= (1, 2)$ and $C$ varies discretely from $1.75< C_1(1,2)\, (\approx 1.821) < 1.85 < C_{12}(1,2)\, (\approx 1.866) < 1.92 < C_2(1,2)\, (=2)< 2.03$, which correspond to the seven curves in the panel from the right to the left, respectively.\label{fig:TAlpha_Func_b_2}]%
             { \includegraphics[scale=1]{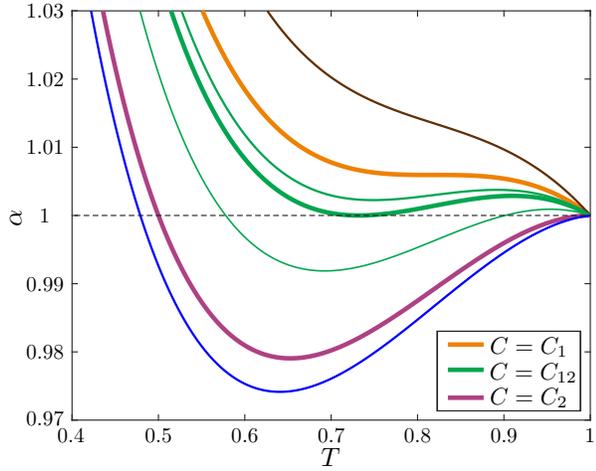} } 
    \caption{ Graph of $\tilde{\alpha}(T)$ with $a=1$, $b>0$ and various $C>0$.}
    \label{fig:Graph_TAlpha_Func}
\end{figure}

The graphs of $\talpha(T)$ for various values of $C$ with $b\leq b_c$ and $b> b_c$ are given in \cref{fig:TAlpha_Func_b_1p25,fig:TAlpha_Func_b_2}, respectively. From \cref{fig:EvolutionBifCurve_Varying_C}, we see clearly that $\tilde{S}$ is $C$-shaped (resp., $S$-shaped) as long as $C\geq C_2$ (resp., $b> b_c$ and $C_1< C < C_2$). Otherwise, it is strictly increasing. For $C_1< C< C_{12}$ (resp., $C_{12}< C< C_2$), the corresponding $\tilde{S}$ has the property that the discontinuous eradication transition is from the high prevalence state to the low prevalence state (resp., the disease free state). Hence, to further distinguish these two types of $S$-shaped curves, we shall term the corresponding $\tilde{S}$ for $C\in (C_1, C_{12})$ and $C\in (C_{12}, C_2)$ $S_r$ and $S_l$-shaped, respectively. From our summary above, we see that nonuniformity in the model is the primary reason for the occurrence of the hysteresis. To see this, in \cref{fig:BifPlane_fixed_a_or_b}, we classify the shape of $\tilde{S}$ by the uniformity measurements $a,b,C$ on the $(a,C)$ and $(b,C)$ planes with the other parameter being fixed. In \cref{fig:TAlpha_Func_b_25p5,fig:EvolutionBifCurve_Varying_a}, we consider a coinfection model of the mixed type. Specifically, we consider the parameters $a$, $b$ and $C$ satisfying $0<C<1$ and $a,b>1$. From \cref{fig:EvolutionBifCurve_Varying_C,fig:BifPlane_fixed_a_or_b,fig:TAlpha_Func_b_25p5,fig:EvolutionBifCurve_Varying_a}, we arrive at the following conclusions:

\begin{figure}[htp]
    \subfloat[ Curve $\tilde{S}$ on the $(\alpha, 1-S)$ plane with parameters $a,b$ and $C$ chosen as those in \cref{fig:TAlpha_Func_b_1p25}.\label{fig:EvoBifCurve_b_1p25}]%
             { \includegraphics[scale=1]{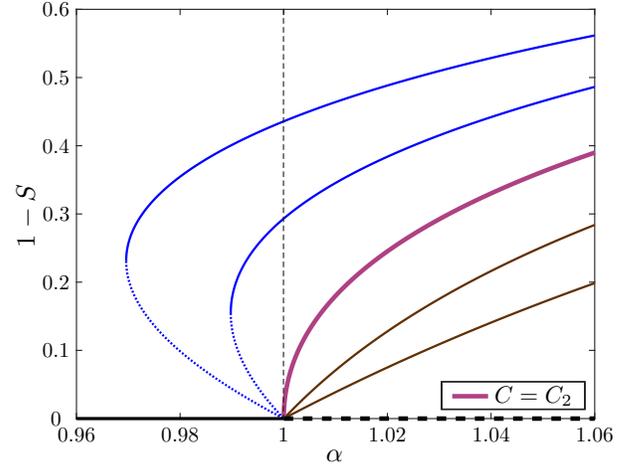} } 
    \\
    \subfloat[ Curve $\tilde{S}$ on the $(\alpha, 1-S)$ plane with parameters $a,b$ and $C$ chosen as those in \cref{fig:TAlpha_Func_b_2}.\label{fig:EvoBifCurve_b_2}]%
             { \includegraphics[scale=1]{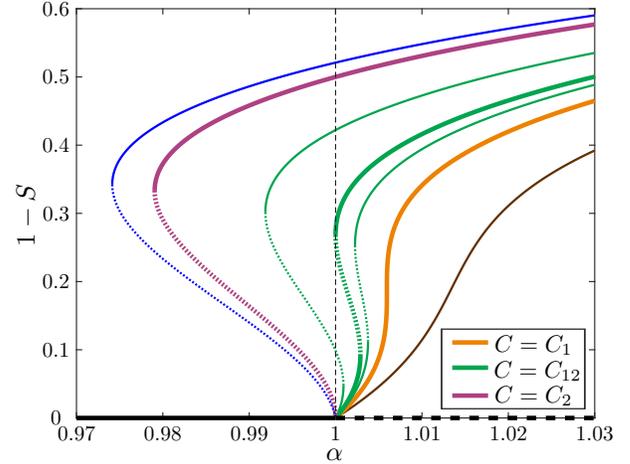} } 
    \caption{Endemic fixed point curve $\tilde{S}$ on the $(\alpha, 1- S)$-plane with fixed $a=1$, $b>0$ and varying $C>0$.}
    \label{fig:EvolutionBifCurve_Varying_C}
\end{figure}

\begin{enumerate}[label=(\Alph*),ref=(\Alph*)]
  \item As mentioned earlier, $C$, $a$, $b$ are the measurement for the degree of uniformity in the coinfection model. The assumptions that $b\leq b_c$ and $C\leq C_2$, or $b>b_c$ and $C\leq C_1$ indicate that the model under consideration is relatively uniform. Model exhibits a continuous transition at $\alpha=1$, the same threshold as the single disease spreading according to SIS model. In short, for such system, it acts as non interacting. Mathematically, the system undergoes a transcritical bifurcation at $\alpha=1$.
  \item If the relative infection factor $C$ is sufficiently large, i.e., $C> C_2$, then the model exhibits the hysteresis of type $C$ regardless of what relative coinfection factors $a$ and $b$ are. In addition to a transcritical bifurcation at $\alpha=1$, a saddle-node bifurcation occurs at $\alpha= \alpha_e$ that generates a bistable region. A hysteresis loop is found on the interval $[\alpha_e, 1]$. The case $C=C_2$ connecting continuous and discontinuous transition from the disease free state has a pitchfork bifurcation at $\alpha=1$, which is supercritical as $b \leq b_c$ while subcritical as $b> b_c$.
  \item If the difference in $b$ and $a$ becomes greater than $1/2$, i.e, $b> b_c$, then hysteresis of types $S_r$ and $S_l$ occur for $C\in (C_1, C_2)$. Two saddle-node bifurcations occur at $\alpha= \alpha_e$ and $\alpha= \alpha_o$. The hysteresis loop for both types is then found on the interval $[\alpha_e, \alpha_o]$. As mentioned in the introductory section, the occurrence of any type of hysteresis presents greater challenge for health departments. The implication of our new results is that if the relative coinfection factors differ by less than $1/2$, hysteresis of any type will never occur in an intermediate range of $C$, \textit{i.e.}, $C_1< C< C_2$. Otherwise, hysteresis of types $S_l$ and $S_r$ will take place in such range. It is worthwhile to point out that the existence of hysteresis of $S_l$ or $S_r$ types guarantees the existence of two stable endemic fixed points in some range of transmission rates while only one stable endemic fixed point is needed for hysteresis of $C$.
  \item If the devastation caused by the diseases is measured by how large the size of infected population jumps in the outbreak and by how much effort needed to eradicate the diseases, then hysteresis of type $C$ (resp., $S_r$) is the most (resp., least) devastating.
  \item For models of a mixed type, we see, via \cref{fig:TAlpha_Func_b_25p5,fig:EvolutionBifCurve_Varying_a}, that three types of hysteresis still exist provided that the model is extremely nonuniform. More precisely, with $(b,C)=(25.5, 0.082)$ by increasing the nonuniformity measurement in $a$, we see that the model goes through the following phase transition: continuous phase transition to hysteresis of $S_r$ type to continuous phase transition to hysteresis of $S_r$ type to hysteresis of $S_l$ type to hysteresis of $C$ type.
  \item The most interesting case, where $b-a>1/2$, exhibits four different regimes in the parameter space, see \cref{thm:main_results}, \cref{fig:BifPlane_a=1,fig:TAlpha_Func_b_3}. Those behaviours can be understood taking into account the microscopic features of the dynamics. In this situation, the cooperation mainly comes through state AB, which are considerably more virulent (\cref{table:Relation_p_alpha}). First, in the case that the value of $C$ is small, there are not enough AB states at the transition point in which the disease breaks out, so the transition is continuous. Moreover, enough AB states appear only when the system is already in a state with most of nodes infected. This makes the effect on cooperation small and hence we have a continuous branch. If the value of $C$ is further increased, the appearance of more AB states is stimulated. Eventually, in the case that enough AB states appear after a threshold $\alpha_o$, which is greater than the point for the outbreak of diseases, is reached, cooperation for doubly infected AB infecting other states starts to be effective, and then an abrupt transition emerges. Finally, if we keep on increasing $C$, the cooperation for secondary infection dominates the dynamics and the model matches the results reported in Ref.\ \cite{Chen2016}.
  \item In the case that the relative infection factor $a$ for susceptible class is relatively higher than that of $b$ for singly infected class, or equivalently, $b\leq a+1/2$, there exist only two different regimes in the parameter space. The reason for the disappearance of $S_r$ and $S_l$ shape is due to the fact that a higher $a$ coupled with a larger $C$ would simulate enough states AB at the transition point where the disease breaks out. Consequently, an abrupt transmission emerges.
\end{enumerate}

\begin{figure}[htp]
    \subfloat[ $a= 1$.\label{fig:BifPlane_a=1}]%
             { \includegraphics[scale=1]{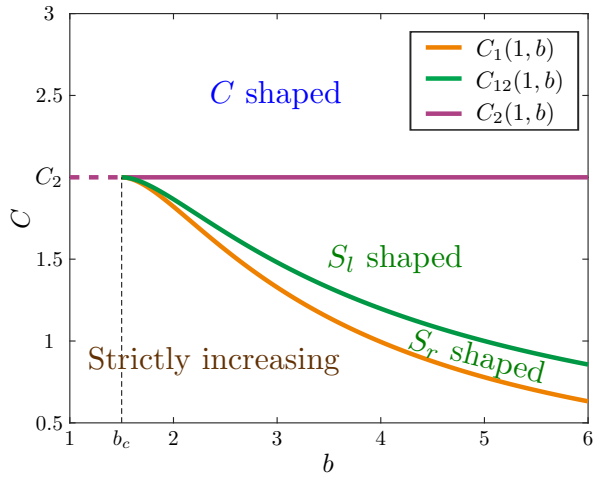} } 
    \\
    \subfloat[ $b= 1.25 < b_c$ whenever $a\geq 1$.\label{fig:BifPlane_b_1p25}]%
             { \includegraphics[scale=1]{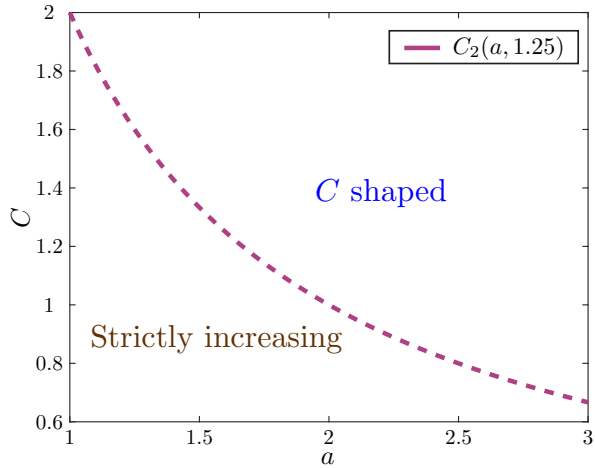} } 
    \\
    \subfloat[ $b= 3 > b_c$ whenever $a< 2.5$, and $b= 3 \leq b_c$ whenever $a\geq 2.5$.\label{fig:TAlpha_Func_b_3}]%
             { \includegraphics[scale=1]{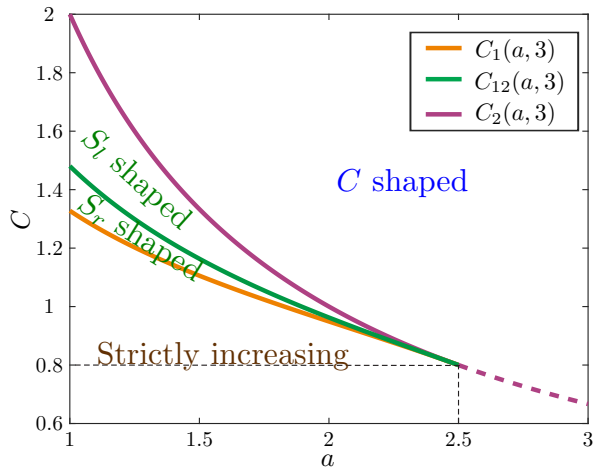} } 
    \caption{Classification of $\tilde{S}$ on the $(a,C)$ and $(b,C)$ planes with the other parameter being fixed.}
    \label{fig:BifPlane_fixed_a_or_b}
\end{figure}

\begin{figure}[htp]
  \centering
  \includegraphics[scale=1]{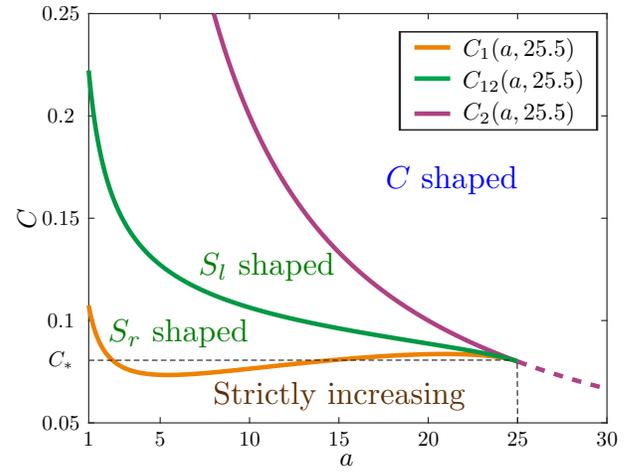} 
  \caption{Classification of $\tilde{S}$ on the $(a,C)$ plane with $b=25.5$ and $C< 1$, which corresponds to a mixed type coinfection model.}\label{fig:TAlpha_Func_b_25p5}
\end{figure}

\begin{figure}[htp]
  \centering
  \includegraphics[scale=1]{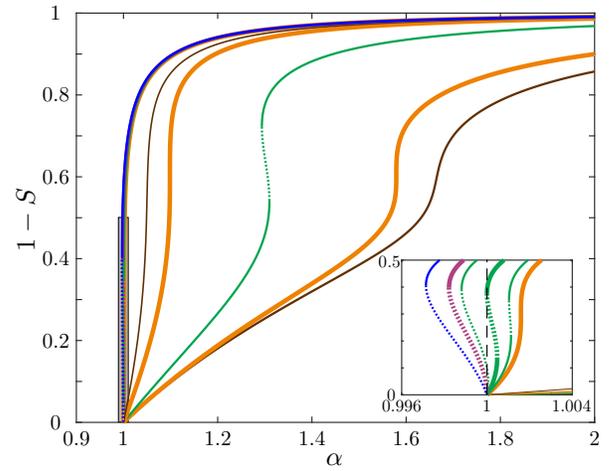} 
  \caption{$\tilde{S}$ on the $(\alpha, 1- S)$ plane with fixed $b= 25.5\, (=: b_*)$, $C= 0.082\, (=: C_*)$ and varying $a>0$ discretely from $1.5< a_1< 7< a_2< 20< a_3< 24.1< a_4< 24.33< a_5< 24.5$. Here $a_1\approx 2.1643$, $a_2\approx16.4997$, $a_3\approx24.0397$ are the three positive roots of $C_1(a, b_*)= C_*$ (see \cref{fig:TAlpha_Func_b_25p5}), and $a_4\approx 24.2080$ and $a_5\approx24.3902$ are the roots of $C_{12}(a_4, b_*)=C_*$ and $C_{2}(a_5, b_*)=C_*$, respectively. The eleven curves in the panel from the right to the left corresponds to various of $a$'s as given above, respectively. The zoom in version of the curves near $\alpha=1$ is displayed on the lower right part of the panel.}
  \label{fig:EvolutionBifCurve_Varying_a}
\end{figure}

\section{Stochastic network approach}\label{sec:STOCHASTIC_NETWORK_MODEL}

  We consider a random network where diseases spread through the links that specify who interacts with whom. If the transmission probability across a link $i$ is $\bar{p}_i$, then the probability that a node is infected through a link with one of its $m$ infected neighbors is $1- \prod\limits_{i=1}^{m} (1- \bar{p}_i)$. An infected node recovers from every infection with a uniform probability $r=0.5$. Having two different diseases, we denote by, respectively, the first type, the second type and the doubly infected type A, B and AB. We assume that susceptible nodes are infected by diseases A and B due to its interaction with an individual from class A or B with probability $p$. We also assume that a susceptible node is infected by disease A or B due to its interaction with an individual from class AB with probability $1-(1-p)^a$. We further assume that individuals in class A (resp., B) are infected with the complementary disease by individuals from classes B and AB (resp., A and AB) with probabilities $1-(1-p)^C$ and $1-(1-p)^{bC}$. The basic reaction scheme underlying the stochastic model, comparing with the well-mixed model, is given in \cref{table:Relation_p_alpha}.

  For $a=b=C=1$, the system corresponds to two noninteracting infections. The case with $C>1$, $a=b=1$, corresponding to a \emph{partially cooperative coinfection model}, has been addressed in Ref.\ \cite{Chen2016}, where cooperative coinfection exhibits \emph{hysteresis of $C$ type}.

  We simulate the stochastic dynamics in an Erdos-Renyi network of size $N=2^{14}$ with average degree $\langle k \rangle=4$. In our simulations, the primary infection probability  $p$ is varied between $p_0$ and $p_f$, both with increasing and decreasing $p$. The initial condition for the simulations at $p=p_0$ when increasing $p$ and at $p=p_f$ when decreasing $p$ is set to be the same. Specifically, we set it with a fraction $\epsilon=10^{-4}$ doubly infected that is in state AB and the rest of the nodes in susceptible state. When the stationary value is reached for every value of $p$, we measure the fraction of infected nodes $1-S$ and then we use that final state for $p$ as an initial condition for next value $p+\Delta p$. However, this approach has absorbing states for individual diseases, which are states from which you cannot leave when one of the diseases has no infected nodes. In that case, for the next value of $p$, we keep final condition for the active disease as the corresponding initial condition and set the initial condition for the disease that reached the absorbing state with a fraction $\epsilon$.

  We set the values $b=2$ and $a=1$ for our parameters, assessing how the dynamics is influenced by the changes in $C$. To avoid including in our statistics realizations in which one disease is in the absorbing state, which could bias our results towards the results of one single disease spreading, we do not include points in which this happens. First of all, as $C$ is increased, a discontinuous transition appears, where the transition point is smaller, with a higher transition gap, for higher values of $C$; see \cref{fig:resultsER}. Secondly, the jump for increasing $p$ happens at a higher value of $p$ than in the case of decreasing it; see \cref{fig:resultsER} inset. This leads to hysteresis of $C$ type, while we do not find the other two types reported for the well-mixed dynamics.

\begin{figure}[htp]
  \centering
  \includegraphics[width=0.5\textwidth]{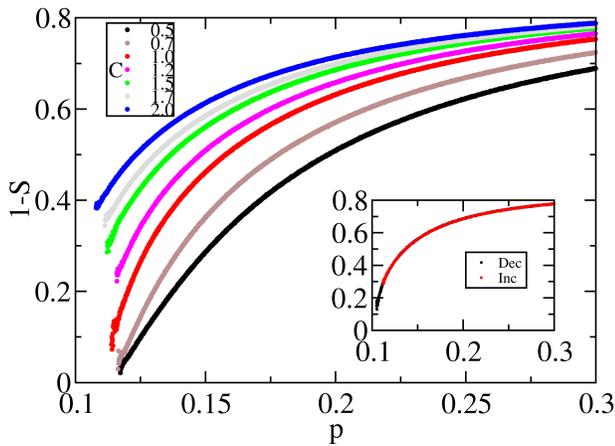} 
  \caption{Fraction of infected nodes for the fully cooperative coinfection model in an Erdos-Renyi graph with $N=2^{14}$ nodes, $\langle k \rangle=4$, averaging over 100 realizations for different values of $C$, $b=2$ and $a=1$, when $p$ is increased. Inset: for $C=1.5$, the transition point is different for increasing and decreasing $p$ (red and black points, respectively), leading to a bistability region that forms a hysteresis loop.}\label{fig:resultsER}
\end{figure}


\section{CONCLUSION}\label{sec:CONCLUSION}

In this paper, we propose a fully cooperative coinfection model, which is investigated through well-mixed and network-based approaches. For the former approach, we prove analytically that the model exhibits three types of hysteresis provided that the model is relatively nonuniform in its parameters. However, we were able to identify just one type of hysteresis in the simulations on random networks.

It is of interest for future work to study the effect of vaccination and awareness on such fully cooperative coinfection model. Moreover, finding the topologies in which $S_r$ and $S_l$ hysteresis appear will underline the microscopic mechanisms that underscore this dynamics.

\revappendix*
\setcounter{section}{0}
\renewcommand{\theproposition}{\Alph{section}.\arabic{proposition}}
\renewcommand{\theequation}{\Alph{section}\arabic{equation}}

\section{Proofs of Some Claims in \cref{sec:Well-Mixed MODEL}}
\label[app_sec]{sec:Appendix_Existence_of_bar_alpha}

In \cref{sec:Appendix_Existence_of_bar_alpha}, we provide the rigorous proof for some claims in the main content.
We begin with proving that the solutions $(S,\alpha)$ of $f(S,\alpha)=0$ with $0<S<1$
implicitly defines a function $\balpha (s)$ from to $(0,1/S]$ such that $f(S, \balpha(S))=0$.

\begin{proposition}\label{prop:exist_func_balpha}
  Fix $a,b\geq 1$ and $C>0$. Then, for any $0< S\leq 1$,
  there exists a unique $\balpha\, (= \balpha(S))$ on $(0,1/S]$ such that $f(S,\balpha(S))=0$.
\end{proposition}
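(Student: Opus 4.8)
The plan is to trade the cubic $f(S,\cdot)$ for a single scalar monotonicity statement after the substitution $T=\alpha S$. First I would collect the terms of $f(S,\alpha)$ by powers of $T=\alpha S$; grouping the two terms that carry a surviving factor of $\alpha$ against the rest, a short computation gives
\[
 f(S,\alpha)=aC\,\alpha\,T\big[(a-b)T+b\big]-R(T),\qquad R(T):=a(a-b)CT^{3}+(abC-2a+1)T^{2}+2(a-1)T+1 .
\]
Since $a,b\ge1$, the linear factor $(a-b)T+b$ equals $b$ at $T=0$ and $a$ at $T=1$, hence is positive on $[0,1]$, so the bracketed coefficient of $\alpha$ is strictly positive for $T\in(0,1]$. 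Dividing, $f(S,\alpha)=0$ is equivalent to $\alpha=\talpha(T)$ with $\talpha(T)=R(T)/\big(aCT[(a-b)T+b]\big)$, and then, using $S=T/\alpha$, to $S=\Psi(T)$ where
\[
 \Psi(T):=\frac{T}{\talpha(T)}=\frac{aCT^{2}\big[(a-b)T+b\big]}{R(T)} .
\]
Because $\alpha\in(0,1/S]$ corresponds bijectively (for fixed $S$) to $T=\alpha S\in(0,1]$, the proposition becomes: for every $S\in(0,1]$ there is exactly one $T\in(0,1]$ solving $\Psi(T)=S$, and then $\balpha=T/S\in(0,1/S]$.

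I would prove this by showing that $\Psi$ is a strictly increasing bijection of $(0,1]$ onto $(0,1]$. The boundary data are immediate: the numerator of $\Psi$ is $O(T^{2})$ while $R(0)=1$, so $\Psi(0^{+})=0$, and a direct evaluation gives numerator $=R(1)=a^{2}C$, so $\Psi(1)=1$. For well-definedness I must know $R>0$ on $(0,1]$; writing $D(T):=R(T)-aCT^{2}[(a-b)T+b]=(1-2a)T^{2}+2(a-1)T+1$, one sees $D$ is a downward parabola with $D(0)=1$, $D(1)=0$ and negative second root, hence $D>0$ on $(0,1)$, and since the subtracted term is positive there, $R>0$ on $(0,1]$. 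Given continuity and the endpoint values, everything reduces to $\Psi'>0$ on $(0,1]$. Writing $N$ for the numerator of $\Psi$ and using $R=N+D$, the numerator of $\Psi'$ collapses to $N'D-ND'$, and a computation yields
\[
 N'D-ND'=aC\,T\,V_{0}(T),\qquad V_{0}(T)=(a-b)(1-2a)T^{3}+4(a-b)(a-1)T^{2}+(2ab+3a-5b)T+2b .
\]
Thus $\Psi'>0$ on $(0,1]$ is equivalent to the $C$-free polynomial positivity $V_{0}(T)>0$ there.

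The hard part is exactly this inequality, and it must hold uniformly for all $a,b\ge1$; the device I would use is that $V_{0}$ is \emph{affine in $b$}. Writing $V_{0}(T)=aA(T)+bB(T)$, the coefficient of $b$ is $B(T)=(2a-1)T^{3}-4(a-1)T^{2}+(2a-5)T+2$, which has $T=1$ as a double root, so $B(T)=(2a-1)(T-1)^{2}\big(T+\tfrac{2}{2a-1}\big)\ge0$ on $(0,1]$ (here $a\ge1$ forces $2a-1>0$ and the third root negative). Hence $V_{0}$ is nondecreasing in $b$ and it suffices to check the worst case $b=1$, where
\[
 V_{0}\big|_{b=1}(T)=-(a-1)(2a-1)T^{3}+4(a-1)^{2}T^{2}+5(a-1)T+2 .
\]
Its derivative is $(a-1)$ times $-3(2a-1)T^{2}+8(a-1)T+5$, a downward parabola taking the values $5$ at $T=0$ and $2a$ at $T=1$, hence positive on $[0,1]$; so $V_{0}|_{b=1}$ is nondecreasing from the value $2$ at $T=0$ and stays $\ge2>0$. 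This gives $V_{0}>0$, whence $\Psi$ is a strictly increasing bijection onto $(0,1]$ and the required unique $\balpha$ follows. The hypotheses enter precisely here: $a\ge1$ controls the signs of the second roots of $B$ and $D$ and the sign of $1-2a$, while $b\ge1$ licenses the reduction to $b=1$; the factor $C>0$ appears only as the positive prefactor $aC$.
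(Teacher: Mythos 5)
Your proof is correct, but it takes a genuinely different route from the paper's. The paper fixes $S$ and treats $f(S,\alpha)=:g_S(\alpha)$ as a cubic in $\alpha$: existence follows from the sign change $g_S(0)=-1<0$ and $g_S(1/S)=a^2C(1-S)/S>0$, and uniqueness from a Rolle-type count of critical points using $g_S'(0)=-2(a-1)S\leq 0$ (two roots in $(0,1/S)$ would force three critical points of a cubic, three roots would force $g_S$ to increase on $(-\infty,0)$), with $S=1$ checked directly. You instead invert the problem: after $T=\alpha S$ you show $S=\Psi(T)=T/\talpha(T)$ is a strictly increasing bijection of $(0,1]$ onto itself, which delivers existence and uniqueness in one stroke. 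The key point of comparison is that your monotonicity inequality is exactly the content of the paper's \cref{prop:relation_balpha_talpha}: your cubic $V_0(T)$ coincides, coefficient by coefficient, with the paper's $N_1(T,a,b)$ (your computation also reveals that the denominator displayed in that proposition should carry an extra factor $[(a-b)T+b]^2$, a harmless typo since that factor is positive). For the positivity of this polynomial the paper has a one-line certificate---regroup $N_1$ as a quadratic in $a$ with coefficients $2(2-T)T^2$, $T(1-T)[2b(1-T)+(3-T)]$ and $b(2-T)(1-T)^2$, each manifestly nonnegative, valid in fact for all $a,b>0$---whereas you exploit affineness in $b$, the double root of the $b$-coefficient at $T=1$, and reduction to the worst case $b=1$; this is sound but more laborious and genuinely uses $a,b\geq 1$. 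What your route buys is economy of structure: it merges \cref{prop:exist_func_balpha,prop:relation_balpha_talpha} into a single statement, so the transfer of monotonicity from $\talpha$ to $\balpha$ used throughout the paper comes for free, and it additionally identifies $\Psi$ as a bijection onto $(0,1]$. What the paper's route buys is a much shorter standalone proof of \cref{prop:exist_func_balpha} requiring no derivative computation at all.
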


\begin{proof}
  Fix $a,b\geq 1$ and $C>0$. We first consider the case $0<S<1$.
  Writing $f(S,\alpha)$ as $g_S(\alpha)$, we have that $g_S(0)= -1$ and $g_S(1/S)= a^2C(1-S)/S >0$ for $0<S<1$.
  Hence, there must exist a $\balpha$ on $(0,1/S)$ such that $g_S(\balpha)=0$.
  Suppose that there exist two roots $\alpha_1$ and $\alpha_2$ of $g_S(\alpha)$ on $(0,1/S)$.
  Then, since $g_S(0)= -1 <0$ and $g_S'(0)= -2(a-1)S \leq 0$, there exists a critical point on $[0, \alpha_1)$.
  Moreover, another critical point should exist on $(\alpha_1, \alpha_2)$. Noting $g_S(1/S) >0$ for $0<S<1$, we also
  conclude that a third critical point exists on $(\alpha_2, 1/S)$, a contradiction to the fact that
  $g_S(\alpha)$ is a cubic polynomial.
  Suppose there exist three roots on $(0,1/S)$. Upon using the fact that $g_S(0)<0$, we conclude that
  $g_S(\alpha)$ must be increasing on $(-\infty, 0)$, a contradiction to the fact that $g_S'(0)\leq 0$.
  Therefore, $g_S(\alpha)$ has a unique root on $(0,1/S)$ for $0<S<1$.
  For $S=1$, the assertion of the proposition can be verified directly.
\end{proof}

We next show that the graphs of $\balpha(S)$ and $\talpha(S)$ defined in \eqref{eq:defi_tilde_alpha}
have the same shape on the interval of $(0,1]$ in the following sense.
Note that the transformation given in \eqref{eq:defi_T} defines a bijective mapping
between $(S, \balpha(S))$ and $(T, \talpha(T))$.

\begin{proposition}\label{prop:relation_balpha_talpha}
  For any $0<t<1$, we have that
  \[
        \sgn \left( \left[ \frac{d}{dT} \talpha(T) \right]_{T=t}  \right) =
        \sgn \left( \left[ \frac{d}{dS} \balpha(S) \right]_{\textstyle S=\frac{t}{\talpha(t)}}  \right).
  \]
  Here function $\sgn(x)$ takes the sign of $x$.
\end{proposition}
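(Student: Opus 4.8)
The plan is to view both curves as parametrized by $S$ and to compare derivatives through the chain rule. Since the map in \eqref{eq:defi_T} is a bijection between $(S,\balpha(S))$ and $(T,\talpha(T))$, along it we have $\alpha = \balpha(S) = \talpha(T)$ with $T = \balpha(S)\,S$, so the identity $\talpha\big(\balpha(S)\,S\big) = \balpha(S)$ holds for $S\in(0,1]$. Writing $T(S) := \balpha(S)\,S$ and differentiating in $S$ gives
\begin{equation*}
    \talpha'(T)\,T'(S) = \balpha'(S), \qquad T'(S) = \balpha(S) + S\,\balpha'(S).
\end{equation*}
Hence $\sgn\big(\talpha'(T)\big) = \sgn\big(\balpha'(S)\big)\,\sgn\big(T'(S)\big)$, and the whole statement reduces to proving $T'(S) > 0$ for the relevant $S$ (when $\balpha'(S)=0$ the first relation forces $\talpha'(T)=0$, so the signs agree trivially). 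Here $S = t/\talpha(t)\in(0,1)$ is the point corresponding to $T=t$.

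Next I would evaluate $T'(S)$ via implicit differentiation of $f(S,\balpha(S))=0$ from \eqref{eq:endemic_fixed_pt_S}, namely $\balpha'(S) = -f_S/f_\alpha$, so that
\begin{equation*}
    T'(S) = \balpha(S) + S\,\balpha'(S) = \frac{\balpha(S)\,f_\alpha - S\,f_S}{f_\alpha}.
\end{equation*}
The key algebraic fact is the factorization $f(S,\alpha) = \alpha\,D(\alpha S) - N(\alpha S)$, where $N(T)$ and $D(T) = aCT[(a-b)T+b]$ are the numerator and denominator of $\talpha(T)$ in \eqref{eq:defi_tilde_alpha}; this is precisely the manipulation that produced \eqref{eq:defi_tilde_alpha}. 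Applying the operator $\alpha\,\partial_\alpha - S\,\partial_S$, which annihilates every function of $T=\alpha S$ and returns $\alpha\,D(\alpha S)$ from the prefactor term, yields the scaling identity $\alpha f_\alpha - S f_S = \alpha\,D(T)$. Therefore, on the curve (where $\balpha(S)=\alpha$),
\begin{equation*}
    T'(S) = \frac{\alpha\,D(T)}{f_\alpha}.
\end{equation*}
Since $\alpha>0$ and, for $0<T\le 1$, the linear factor $(a-b)T+b$ is positive at $T=0$ (value $b\ge1$) and at $T=1$ (value $a\ge1$), we get $D(T)>0$; thus $\sgn\big(T'(S)\big)=\sgn(f_\alpha)$ and it remains only to show $f_\alpha>0$.

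The main obstacle is this last strict positivity of $f_\alpha = f_\alpha(S,\balpha(S))$, and I would extract it from the proof of \cref{prop:exist_func_balpha}. There $g_S(\alpha):=f(S,\alpha)$ is a cubic with $g_S(0)=-1<0$, $g_S(1/S)>0$, and a unique root $\balpha$ on $(0,1/S)$. Uniqueness forces $g_S<0$ on $(0,\balpha)$ and $g_S>0$ on $(\balpha,1/S)$, so $\balpha$ is a sign-changing root and $g_S'(\balpha)=f_\alpha\ge 0$. To rule out $f_\alpha=0$, note that a double root would make $\balpha$ a local extremum of $g_S$ and destroy the sign change, while a triple root $g_S(\alpha)=c(\alpha-\balpha)^3$ would give $g_S'(0)=3c\,\balpha^2>0$, contradicting $g_S'(0)=-2(a-1)S\le 0$. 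Hence $f_\alpha>0$, so $T'(S)>0$ and $\sgn\big(\talpha'(t)\big)=\sgn\big(\balpha'(t/\talpha(t))\big)$, as claimed. Everything outside this positivity step is routine bookkeeping.
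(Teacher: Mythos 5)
Your proof is correct, but it takes a genuinely different route from the paper's. The paper stays in the $T$-variable: by the chain rule it writes $\balpha'(S)$ as $\talpha'(T)$ divided by $\frac{d}{dT}\bigl[T/\talpha(T)\bigr]$, and then verifies positivity of that denominator by brute force, computing $\frac{d}{dT}\bigl[T/\talpha(T)\bigr]=N_1(T,a,b)/\bigl(aCT[\talpha(T)]^2\bigr)$ and checking that the explicit polynomial $N_1$, viewed as a quadratic in $a$, has positive coefficients on $(0,1)$. You instead work in the $S$-variable with the implicit equation $f(S,\balpha(S))=0$, exploit the factorization $f(S,\alpha)=\alpha D(\alpha S)-N(\alpha S)$ and the Euler-type scaling identity $\alpha f_\alpha-Sf_S=\alpha D(T)$ to get $T'(S)=\balpha(S)D(T)/f_\alpha$, and then obtain $f_\alpha>0$ \emph{qualitatively} from the sign-change and root-uniqueness structure already established in \cref{prop:exist_func_balpha}, rather than from any polynomial estimate. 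Both arguments prove the same underlying fact --- that the correspondence $S\leftrightarrow T$ along the solution curve is strictly increasing --- but yours avoids the computer-algebra-style positivity check, reuses \cref{prop:exist_func_balpha}, and yields the extra dividend that the root $\balpha(S)$ is simple ($f_\alpha>0$), which is exactly what legitimizes the implicit differentiation; the paper's version is shorter and needs no implicit function theorem. Two small repairs: first, you invoke $\balpha'(S)=-f_S/f_\alpha$ before knowing $f_\alpha\neq 0$, so reorder --- prove $f_\alpha>0$ first (your argument for it uses no differentiability of $\balpha$), then conclude smoothness of $\balpha$ and the formula; second, $g_S$ is a true cubic in $\alpha$ only when $a\neq b$ (its leading coefficient is $a(b-a)CS^2(S-1)$), so for $a=b$ the triple-root case is vacuous and the double-root exclusion alone finishes the argument --- worth a parenthetical, since the form $c(\alpha-\balpha)^3$ presumes degree exactly three. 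Finally, like the paper, you take for granted that $T=\alpha S$ maps the curve bijectively with $S=t/\talpha(t)\in(0,1)$ for $t\in(0,1)$; that matches the paper's own level of rigor, so it is not a gap relative to it.
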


\begin{proof}
  By the chain rule, we have that, for any $0<t<1$,
  \[
        \left[ \frac{d}{dS} \balpha(S) \right]_{\textstyle S=\frac{t}{\talpha(t)}} =
        \left[ \frac{ \frac{d}{dT} \talpha(T) }%
                    { \frac{d}{dT} \left[ \frac{T}{\talpha(T)} \right] }
        \right]_{T=t}.
  \]
  To complete the proof, it suffices to show that $\frac{d}{dT} \left[ \frac{T}{\talpha(T)} \right] >0$ for all $0<T<1$.
  Indeed, some direct computation yield that
  \[
    \frac{d}{dT} \left[ \frac{T}{\talpha(T)} \right] =
    \frac{N_1(T,a,b)}%
         {aCT [\talpha(T)]^2},
  \]
  where
  \[
    \begin{aligned}
      N_1(T,a,b) &= [2(2-T) T^2]a^2\\
                 &\;\;+ \{ T(1-T) [2b(1-T) +(3-T)] \} a\\
                 &\;\;+ b (2-T)(T-1)^2\\
                 &>0.
    \end{aligned}
  \]
  We have just completed the proof of the proposition.
\end{proof}

To see the shape of the graph of $\talpha(T)$,
we first compute the derivatives of $\talpha(T)$ at its two end points, $0$ and $1$.

\begin{proposition}\label{prop:diff_talpha_endpt}
  The following assertions hold.
  \begin{enumerate}[label=(\roman*),ref=(\roman*)]
        \item\label{item:talpha_endpt}
             $\lim_{T\rightarrow 0^+} \talpha(T)= \infty$ and $\talpha(1)= 1$.
        \item\label{item:dtalpha_endpt}
             $\lim_{T\rightarrow 0^+} \talpha'(T)= -\infty$ and $\talpha'(1)= 1-\frac{2}{aC}$.
        \item\label{item:ddtalpha_endpt}
             $\lim_{T\rightarrow 0^+} \talpha''(T)= \infty$ and $\talpha''(1)= -\frac{4}{a^2C} [b-a-1/2]$.
        \item\label{item:dddtalpha_endpt}
             $\talpha'''(T)< 0$ for all $0<T\leq 1$ and $a\geq 1$.
  \end{enumerate}
\end{proposition}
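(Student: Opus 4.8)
The plan is to reduce everything to a single convenient form for $\talpha$. Writing $\talpha=N/D$ with $N(T)=a(a-b)CT^3+(abC-2a+1)T^2+2(a-1)T+1$ and $D(T)=aCT[(a-b)T+b]$, polynomial division yields $\talpha(T)=T+P(T)/D(T)$ with $P(T)=(1-2a)T^2+2(a-1)T+1$, and a partial-fraction split over the two roots $T=0$ and $T_0=b/(b-a)$ of $D$ gives
\[
  \talpha(T)=T+\frac{\tilde c}{aC}+\frac{\tilde A}{aCT}+\frac{\tilde B}{aC[(a-b)T+b]},
\]
where $\tilde A=P(0)/b=1/b$ and $\tilde B=P(T_0)/T_0$. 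Since $(a-b)T+b\ge\min\{a,b\}>0$ on $(0,1]$, the last term stays bounded as $T\to0^+$, so the term $\tilde A/(aCT)=1/(abCT)$ dominates; reading off its contribution proves at once the three $T\to0^+$ limits $\talpha\to+\infty$, $\talpha'\to-\infty$ and $\talpha''\to+\infty$ in \ref{item:talpha_endpt}--\ref{item:ddtalpha_endpt}.

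For the values at $T=1$ I would substitute directly rather than use the decomposition. The clean observation is $N(1)=D(1)=a^2C$, giving $\talpha(1)=1$. Differentiating the identity $\talpha D=N$ once and evaluating at $T=1$ gives $\talpha'(1)=(N'(1)-D'(1))/D(1)$, which I expect to collapse to $1-2/(aC)$; differentiating twice gives $\talpha''D+2\talpha'D'+\talpha D''=N''$, hence
\[
  \talpha''(1)=\frac{N''(1)-2\talpha'(1)D'(1)-D''(1)}{D(1)},
\]
and inserting the values just found should simplify to $-\tfrac{4}{a^2C}(b-a-\tfrac12)$. These are routine evaluations of $N',N'',D',D''$ at $1$.

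The substantive part is \ref{item:dddtalpha_endpt}. Applying $\tfrac{d^3}{dT^3}$ to the decomposition kills $T$ and the constant and leaves only the two reciprocal terms, so
\[
  \talpha'''(T)=-\frac{6\tilde A}{aCT^4}-\frac{6\tilde B(a-b)^3}{aC[(a-b)T+b]^4}.
\]
The first term is strictly negative because $\tilde A=1/b>0$. The \emph{hard part} is the sign of the second term, i.e.\ showing $\tilde B(a-b)^3\ge0$. I would compute it through the identity $(b-a)^2P(T_0)=a^2(1-2b)$, verified by expanding the three terms of $P(T_0)$ against $(b-a)^2$, after which the $b^2$, $ab^2$ and $ab$ contributions cancel; this gives $\tilde B=a^2(1-2b)/[b(b-a)]$ and hence
\[
  \tilde B(a-b)^3=\frac{a^2(2b-1)(b-a)^2}{b}\ge0,
\]
using $b\ge1$. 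Both terms of $\talpha'''$ are then $\le0$ with the first $<0$, establishing $\talpha'''<0$ on $(0,1]$. Finally I would dispose of the degenerate case $a=b$, where $D$ is linear and the partial fraction has no $T_0$-term: there $\talpha'''=-6/(a^2CT^4)<0$ directly, consistent with the second term vanishing in the limit $a\to b$. The only real obstacle is the residue computation and its sign; once the identity $(b-a)^2P(T_0)=a^2(1-2b)$ is in hand, positivity is immediate from $b\ge1$.
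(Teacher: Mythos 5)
Your proposal is correct, and it takes a genuinely different route from the paper. I checked the computations: the division remainder is indeed $P(T)=(1-2a)T^2+2(a-1)T+1$, the residues are $\tilde A=1/b$ and $\tilde B=a^2(1-2b)/[b(b-a)]$ (your identity $(b-a)^2P(T_0)=a^2(1-2b)$ expands correctly), and the two $T=1$ evaluations you deferred do collapse as claimed, namely $\talpha'(1)=(N'(1)-D'(1))/D(1)=(a^2C-2a)/(a^2C)=1-\tfrac{2}{aC}$ and $N''(1)-2\talpha'(1)D'(1)-D''(1)=4a-4b+2$, giving $\talpha''(1)=-\tfrac{4}{a^2C}(b-a-\tfrac12)$; the degenerate case $a=b$ is properly handled. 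The paper instead differentiates $\talpha$ three times as a single rational function, recording explicit numerators $M_1,M_2,M_3$, and proves part (iv) by viewing $M_3$ as a cubic in $b$ with $c_0,c_3<0$ and showing the discriminant-type bound $c_1^2-4c_0c_2\leq -6a^4T^4<0$, which uses $a\geq 1$ (really $a\geq 7/8$). Your partial-fraction route is structurally more transparent: $\talpha$ is a line plus two simple poles, so $\talpha'''$ is visibly a sum of two terms whose signs are controlled by $\tilde A=1/b>0$ and
\[
  \tilde B(a-b)^3=\frac{a^2(2b-1)(b-a)^2}{b}\geq 0,
\]
and it is slightly more general, needing only $b\geq 1/2$ with any $a>0$. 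What the paper's brute-force route buys, and yours does not directly provide, is the explicit polynomials $M_1$ and $M_2$, which are reused downstream: the sign identity $\sgn(\talpha'(T))=\sgn(M_1(T,a,b,C))$ and the $C$-independence of the inflection point $\gamma$ via $M_2(\gamma,a,b)=0$ are the backbone of the later propositions defining $C_1$ and $C_{12}$, so if your proof replaced the paper's, those formulas would have to be derived separately (which your decomposition would also permit, e.g., $M_1$ follows from clearing denominators in $\talpha'=1-\tilde A/(aCT^2)-\tilde B(a-b)/(aC[(a-b)T+b]^2)$).
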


\begin{proof}
  Some direct computations yield that
  \begin{equation}\label{eq:formula:diff_talpha}
     \begin{aligned}
        \talpha'(T)   &=   \frac{M_1(T,a,b,C)}{ aCT^2 [(a-b)T+b]^2},\\
        \talpha''(T)  &=  2 \frac{M_2(T,a,b)  }{ aCT^3 [(a-b)T+b]^3},\\
        \talpha'''(T) &=  6 \frac{M_3(T,a,b)  }{ aCT^4 [(a-b)T+b]^4},\\
     \end{aligned}
  \end{equation}
  where
  \begin{equation}\label{eq:formula:diff_talpha_upper}
     \begin{aligned}
        M_1(T,a,b,C) &= [a(a-b)^2C]T^4+ [2ab(a-b)C]T^3\\
                     &\;\;\;+ [ab^2C-2a^2+2a-b]T^2\\
                     &\;\;\;+ 2(b-a) T -b,\\
        M_2(T,a,b)   &= [(a-b)(2a^2-2a+b)]T^3\\
                     &\;\;\;+ 3(a-b)^2T^2+3b(a-b)T+b^2,\\
        M_3(T,a,b)   &= - [(a-b)^2(2a^2-2a+b)]T^4\\
                     &\;\;\;-4(a-b)^3 T^3 -6b(a-b)^2 T^2\\
                     &\;\;\;-4b^2(a-b)T-b^3.
     \end{aligned}
  \end{equation}
  Then the assertions of parts \cref{item:talpha_endpt,item:dtalpha_endpt,item:ddtalpha_endpt}
  in the proposition can be verified directly.
  To see \cref{item:dddtalpha_endpt}, we rewrite $M_3(T,a,b)$ as the polynomial form of $b$.
  Specifically, we have that
  \begin{align*}
    M_3(T,a,b) &= c_3 b^3+ c_2 b^2+ c_1 b+ c_0\\
               &= b^2 \left[ c_0 b^{-2}+ c_1 b^{-1}+ c_2 \right]+ c_3 b^3,
  \end{align*}
  where $c_3= -(1-T)^4\, (<0)$, $c_2=  -2aT [(a-2)T^3+6T^2-6T+2]$, $c_1= a^2 T^2 [(4a-5)T^2+12T-6]$,
  $c_0= -2a^3 T^3 [(a-1)T+2]\, (<0)$.
  Hence, to conclude the assertion \cref{item:dddtalpha_endpt} in the proposition,
  we only need to show that $c_1^2- 4c_0c_2< 0$ for all $0<T\leq 1$ and $a\geq 1$. Indeed, we compute that
  $c_1^2- 4c_0c_2 = a^4 T^4 [ (8a-7) (1-T)^4- 12 (1-T)^3- 6 (1-T)^2-4(1-T)+ (1-8a) ] \leq
                    a^4 T^4 [ (8a-7)+ (1-8a) ]= -6a^4 T^4<0$.
  The proof is now complete.
\end{proof}

\begin{proposition}\label{prop:b_less_than_b_c}
  For any $a\geq 1$ and $b\leq b_c (:= a+1/2)$, the following two assertions hold.
  \begin{enumerate}[label=(\roman*),ref=(\roman*)]
        \item\label{item:case_b_less_b_c_and_C_less_C_2}
             If $C\leq C_2$, then $\talpha(T)$ is a strictly decreasing function of $T$ on $(0,1]$.
        \item\label{item:case_b_less_b_c_and_C_larger_C_2}
             If $C> C_2$, then $\talpha(T)$ has a unique minimum on $(0,1)$.
             In particular, the graph of $\talpha(T)$ is $C$-shaped.
  \end{enumerate}
  Here $C_2= \frac{2}{a}$ is define as in \eqref{eq:defi_C_2}.
\end{proposition}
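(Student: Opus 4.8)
The plan is to reduce both assertions to a single sign analysis of the first derivative $\talpha'(T)$, which I control by first establishing convexity of $\talpha$ on $(0,1)$ and then reading off the boundary data supplied by \cref{prop:diff_talpha_endpt}. The whole argument rests on the one structural fact that $\talpha'''(T)<0$ on $(0,1]$ (\cref{prop:diff_talpha_endpt}\cref{item:dddtalpha_endpt}), which says precisely that $\talpha''$ is strictly decreasing; the hypothesis $b\leq b_c$ enters only through the sign of the endpoint value $\talpha''(1)$.

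First I would establish that $\talpha$ is strictly convex on $(0,1)$. Since $b\leq b_c= a+1/2$ we have $b-a-1/2\leq 0$, and hence $\talpha''(1)= -\frac{4}{a^2C}(b-a-1/2)\geq 0$ by \cref{prop:diff_talpha_endpt}\cref{item:ddtalpha_endpt}. Because $\talpha''$ is strictly decreasing on $(0,1]$, combining $\lim_{T\to 0^+}\talpha''(T)= +\infty$ with $\talpha''(1)\geq 0$ forces $\talpha''(T)> \talpha''(1)\geq 0$ for every $T\in(0,1)$. Consequently $\talpha'$ is strictly increasing on $(0,1]$, running from $\lim_{T\to 0^+}\talpha'(T)= -\infty$ up to $\talpha'(1)= 1-\frac{2}{aC}$ by \cref{prop:diff_talpha_endpt}\cref{item:dtalpha_endpt}.

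The two cases then follow from the sign of $\talpha'(1)$. For part \cref{item:case_b_less_b_c_and_C_less_C_2}, the condition $C\leq C_2= 2/a$ gives $\frac{2}{aC}\geq 1$, hence $\talpha'(1)\leq 0$; since $\talpha'$ is strictly increasing, $\talpha'(T)< \talpha'(1)\leq 0$ on $(0,1)$, so $\talpha$ is strictly decreasing on $(0,1]$. For part \cref{item:case_b_less_b_c_and_C_larger_C_2}, the condition $C> C_2$ gives $\talpha'(1)= 1-\frac{2}{aC}> 0$; combining $\lim_{T\to 0^+}\talpha'(T)= -\infty$ with $\talpha'(1)> 0$ and the strict monotonicity of $\talpha'$, the intermediate value theorem yields a unique $T_*\in(0,1)$ with $\talpha'(T_*)= 0$, where $\talpha'< 0$ on $(0,T_*)$ and $\talpha'> 0$ on $(T_*,1]$. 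Hence $\talpha$ decreases and then increases, so $T_*$ is its unique minimum on $(0,1)$ and the graph is $C$-shaped.

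I do not expect a serious obstacle, since the genuinely hard analytic content—the negativity of $\talpha'''$ obtained via its discriminant—is already packaged in \cref{prop:diff_talpha_endpt}. The only points demanding care are the handling of the infinite boundary limits at $T=0^+$, so that statements about monotonicity are meaningful on the half-open interval, and the observation that the possible equality $\talpha''(1)=0$ at $b=b_c$ does not spoil strict convexity on the \emph{open} interval $(0,1)$; both are immediate once $\talpha''$ is known to be strictly decreasing with $\lim_{T\to 0^+}\talpha''(T)=+\infty$.
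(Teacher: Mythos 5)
Your proof is correct and follows essentially the same route as the paper's: both deduce strict convexity of $\talpha$ on $(0,1)$ from $\talpha'''<0$ together with $\talpha''(1)\geq 0$ (which is exactly where the hypothesis $b\leq b_c$ enters), and then split the two cases according to the sign of $\talpha'(1)=1-\frac{2}{aC}$, i.e., according to whether $C\leq C_2$ or $C> C_2$. Your write-up is simply more explicit than the paper's about the intermediate value argument locating the unique critical point when $C> C_2$.
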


\begin{proof}
  Since $b\leq b_c$, we have that $\talpha''(1)\geq 0$ by \cref{prop:diff_talpha_endpt}\cref{item:ddtalpha_endpt}.
  Consequently, $\talpha''(T)> 0$ for all $0<T<1$ by \cref{prop:diff_talpha_endpt}\cref{item:dddtalpha_endpt},
  which implies that $\talpha(T)$ is concave up on $(0,1]$.
  In addition, $\talpha(T)$ is strictly decreasing on $(0,1]$
  if $\talpha'(1) \leq 0$,
  while first decreasing and then increasing on $(0,1]$ if $\talpha'(1) > 0$.
  Above cases correspond to $C\leq C_2$ or $C> C_2$, respectively, by \cref{prop:diff_talpha_endpt}\cref{item:dtalpha_endpt}.
  Hence, the proof is complete.
\end{proof}

For $a\geq 1$ and $b>b_c$, the graph of $\talpha(T)$ is more complicated since
the concavity of $\talpha(T)$ changes from the upward to the downward on $(0,1)$
by \cref{prop:diff_talpha_endpt}\cref{item:ddtalpha_endpt,item:dddtalpha_endpt}.
More precisely, $\talpha(T)$ has an inflection point $\gamma$ on $(0,1)$ such that
\begin{equation}\label{eq:inflection_pt_talpha}
  \talpha''(T)
  \left\{
    \begin{array}{cl}
      >0 & \text{if } T\in (0, \gamma), \\
      =0 & \text{if } T= \gamma, \\
      <0 & \text{if } T\in (\gamma,1]. \\
    \end{array}
  \right.
\end{equation}
To classify its shape, we first find $C$ so that the corresponding $\talpha(T)$ has the some particular
properties.

\begin{proposition}\label{prop:defi_C1_C_12}
  For any $a\geq 1$ and $b> b_c\, (:= a+1/2)$, define $\gamma$ as in \eqref{eq:inflection_pt_talpha}.
  Then the following three assertions hold.
    \begin{enumerate}[label=(\roman*),ref=(\roman*)]
        \item\label{item:existence_C1}
             There exists a unique positive $C_1$, defined as in \eqref{eq:defi_C_1}, such that
             \begin{equation}\label{eq:property_C1}
                \talpha'(\gamma)= \talpha''(\gamma)=0.
             \end{equation}
        \item\label{item:existence_C12}
             There exists a unique positive $C_{12}$, defined as in \eqref{eq:defi_C_12}, such that the equation
             \begin{equation}\label{eq:property_C12}
                \talpha(T)=1\, (=\talpha(1))
             \end{equation}
             has a unique solution on $(0,1)$ when $C=C_{12}$.
             Moreover, \eqref{eq:property_C12} has no solution on $(0,1)$ when $C< C_{12}$.
        \item\label{item:relation_C1_C12_C2}
             It holds that $C_1< C_{12}< C_2$.
             Here $C_2= \frac{2}{a}$ is define as in \eqref{eq:defi_C_2}.
    \end{enumerate}
\end{proposition}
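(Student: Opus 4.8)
The plan is to treat the three assertions in turn, using throughout that the inflection point $\gamma$ of \eqref{eq:inflection_pt_talpha} is the unique zero of $M_2(T,a,b)$ in $(0,1)$ and is hence independent of $C$. For assertion \ref{item:existence_C1}, the equality $\talpha''(\gamma)=0$ holds by the definition of $\gamma$, so only $\talpha'(\gamma)=0$ must be imposed. The $C$-dependent part of $M_1$ collects into $aCT^2[(a-b)T+b]^2$, so $M_1(T,a,b,C)=aCT^2[(a-b)T+b]^2+Q(T)$ with $Q(T)=(-2a^2+2a-b)T^2+2(b-a)T-b$; this is affine in $C$ with strictly positive coefficient at $T=\gamma$ (as $(a-b)\gamma+b=b-(b-a)\gamma>0$), so there is a single $C_1=-Q(\gamma)/(a\gamma^2[(a-b)\gamma+b]^2)$ making $\talpha'(\gamma)=0$. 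To identify $C_1$ with \eqref{eq:defi_C_1} one must solve $M_2(\gamma)=0$ in closed form, and the one genuine idea is to write $M_2$ as a perturbed cube: taking $\lambda=(a-b)b^{-1/3}$ and $\mu=b^{2/3}$ gives $(\lambda T+\mu)^3=\tfrac{(a-b)^3}{b}T^3+3(a-b)^2T^2+3b(a-b)T+b^2$, whence
\[
  M_2(T,a,b)=\frac{a^2(a-b)(2b-1)}{b}\,T^3+(\lambda T+\mu)^3 .
\]
Setting $T=\gamma$ and extracting the positive real cube root yields $b-(b-a)\gamma=a^{2/3}(b-a)^{1/3}(2b-1)^{1/3}\gamma$, so $\gamma=b/[(b-a)^{1/3}K]$ with $K:=a^{2/3}(2b-1)^{1/3}+(b-a)^{2/3}$, together with the useful identity $(a-b)\gamma+b=a^{2/3}(b-a)^{1/3}(2b-1)^{1/3}\gamma$. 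Substituting these into $C_1=-Q(\gamma)/(a\gamma^2[(a-b)\gamma+b]^2)$ should collapse to $C_1=K^3/(ab^3)$, i.e. \eqref{eq:defi_C_1}, from which $C_1>0$ is manifest. I expect the final simplification of $-Q(\gamma)$, the only piece carrying fractional powers, to be the main computational obstacle, the cube-completion being the essential trick.

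For assertion \ref{item:existence_C12}, a one-line rearrangement of \eqref{eq:defi_tilde_alpha} gives $\talpha(T)=T+R(T)/(aCT[(a-b)T+b])$ with $R(T)=(1-T)(1+(2a-1)T)>0$ on $(0,1)$. Hence $\talpha(T;C)$ is strictly decreasing in $C$ for each fixed $T\in(0,1)$, and $\talpha(T)=1$ is equivalent to $C=\psi(T):=[(2a-1)T+1]/(aT[(a-b)T+b])$, so the solutions of \eqref{eq:property_C12} in $(0,1)$ are exactly the preimages of $C$ under $\psi$. A short computation shows that $\psi'$ vanishes precisely where $(2a-1)(b-a)T^2+2(b-a)T-b=0$, a quadratic whose left side is $-b<0$ at $T=0$ and $a(2b-2a-1)>0$ at $T=1$, hence with a unique root $T_{\min}\in(0,1)$; since $\psi\to\infty$ as $T\to0^+$, $\psi$ strictly decreases on $(0,T_{\min})$ and strictly increases on $(T_{\min},1]$. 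Thus $\min_{(0,1)}\psi=\psi(T_{\min})$ is attained exactly once, giving a unique solution of \eqref{eq:property_C12} when $C=\psi(T_{\min})$ and none when $C<\psi(T_{\min})$; set $C_{12}:=\psi(T_{\min})$. To put $C_{12}$ in the closed form \eqref{eq:defi_C_12} I would note that $\psi(T_{\min})$ is the level at which the equivalent quadratic $a(a-b)CT^2+(abC-2a+1)T-1$ (the quadratic left after dividing the numerator of $\talpha-1$ by $T-1$) acquires a double root, i.e. its discriminant $(abC-2a+1)^2-4a(b-a)C$ vanishes; solving this quadratic in $C$ and using $(2ab-2a+b)^2-b^2(2a-1)^2=4a(b-a)(2b-1)$ selects the larger root, which is \eqref{eq:defi_C_12}, the smaller root corresponding to a double root outside $(0,1)$.

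The ordering in assertion \ref{item:relation_C1_C12_C2} then follows almost for free. Since $\psi(1)=2/a=C_2$ and $\psi$ strictly increases on $(T_{\min},1]$, we get $C_{12}=\psi(T_{\min})<\psi(1)=C_2$. For $C_1<C_{12}$, observe that by the monotonicity of $\talpha$ in $C$ one has $\talpha(T;C)>1$ for all $T\in(0,1)$ exactly when $C<\psi(T)$ for all such $T$, i.e. when $C<\min_{(0,1)}\psi=C_{12}$; so it suffices to prove $\talpha(\cdot\,;C_1)>1$ on $(0,1)$. At $C=C_1$ the sign pattern \eqref{eq:inflection_pt_talpha} makes $\talpha'$ increase on $(0,\gamma)$ and decrease on $(\gamma,1)$, so $\talpha'$ attains its maximum over $(0,1)$ at $\gamma$, where $\talpha'(\gamma)=0$ by \ref{item:existence_C1}; therefore $\talpha'\le0$ on $(0,1]$ and $\talpha(\cdot\,;C_1)$ decreases to $\talpha(1)=1$, staying above $1$ on $(0,1)$. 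Hence $C_1<\psi(T_{\min})=C_{12}$, the strict inequality being secured because the minimum of $\psi$ is attained at the interior point $T_{\min}$.
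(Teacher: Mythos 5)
Your proof is correct, and on parts \ref{item:existence_C12} and \ref{item:relation_C1_C12_C2} it takes a genuinely different route from the paper's. Part \ref{item:existence_C1} shares the paper's skeleton ($\gamma$ is $C$-independent because $M_2$ is; $M_1$ is affine in $C$ with positive coefficient $aT^2[(a-b)T+b]^2$, so $C_1$ exists uniquely), but where the paper produces the closed form \eqref{eq:defi_C_1} ``by using the Maple,'' your cube completion $M_2=\tfrac{a^2(a-b)(2b-1)}{b}T^3+(\lambda T+\mu)^3$ does it by hand; and the final collapse you flag as the main obstacle does go through: writing $u:=(b-a)\gamma$, the relation $M_2(\gamma)=0$ gives $(2a^2-2a+b)\gamma^2=3u-3b+b^2/u$, hence $-Q(\gamma)=(b-u)^2/u$, and since your cube-root extraction gives $b-u=a^{2/3}(b-a)^{1/3}(2b-1)^{1/3}\gamma$, the factor $(b-u)^2$ cancels against the same factor in the denominator, yielding $C_1=1/\bigl(a(b-a)\gamma^3\bigr)=K^3/(ab^3)$ with $K:=a^{2/3}(2b-1)^{1/3}+(b-a)^{2/3}$, i.e.\ \eqref{eq:defi_C_1}. (Your $Q$ also silently corrects a sign typo in the paper's \eqref{eq:M_1}, whose linear term should read $-2(b-a)T$; positivity of the corrected $\alpha_0$ still holds since its discriminant is $4a^2(1-2b)<0$.) For part \ref{item:existence_C12} the paper factors out $T=1$, imposes a vanishing discriminant $L(a,b,C)=0$ on the residual quadratic, and locates the double root ``by some tedious computation''; you instead invert $\talpha(T)=1$ into the level-set condition $C=\psi(T)$ and show $\psi$ has a unique interior minimum $T_{\min}$, which buys, in one stroke, uniqueness of the solution at $C_{12}=\psi(T_{\min})$, nonexistence for $C<C_{12}$, the full root count for every $C$, and---in part \ref{item:relation_C1_C12_C2}---the inequality $C_{12}<C_2$ directly from $\psi(1)=2/a=C_2$ and strict monotonicity of $\psi$ on $(T_{\min},1]$, which is cleaner than the paper's local argument at $T=1$, $C=C_2$ using $\talpha'(1)=0$ and $\talpha''(1)<0$. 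Your one hedge---selecting the larger root $C_+$ in \eqref{eq:defi_C_12}---also closes within your framework: the tangency at $T_{\min}$ is a double root of the quadratic, so $L(a,b,C_{12})=0$ and $C_{12}\in\{C_-,C_+\}$; were $C_{12}=C_-$, then at $C=C_+>\psi(T_{\min})$ the decreasing branch of $\psi$ would supply a root of the quadratic in $(0,1)$, which at a discriminant zero must itself be the double root, forcing $\psi'=0$ there, hence that root is $T_{\min}$ and $C_+=\psi(T_{\min})=C_-$, a contradiction. Your argument for $C_1<C_{12}$ (at $C=C_1$ the sign pattern \eqref{eq:inflection_pt_talpha} places the maximum of $\talpha'$ at $\gamma$, where it vanishes, so $\talpha>1$ on $(0,1)$, and pointwise $C_1<\psi(T_{\min})$) coincides in essence with the paper's.
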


\begin{proof}
  Fix $a\geq 1$ and $b> b_c (:= a+1/2)$. We first show \cref{prop:defi_C1_C_12}\cref{item:existence_C1}.
  Since $\gamma$ is the inflection point of $\talpha$, it holds that $\talpha''(\gamma)=0$.
  We have, via \eqref{eq:formula:diff_talpha_upper}, that $M_2(\gamma,a,b)=0$.
  Moreover, since function $M_2$ is independent of $C$,
  we have that $\gamma$ is also independent of $C$.
  On the other hand, note that
  \begin{subequations}\label{eq:diff_talpha_M_1}
    \begin{equation}\label{eq:sign_diff_talpha_M_1}
        \sgn(\talpha'(T))= \sgn(M_1(T,a,b,C))
    \end{equation}
  by \eqref{eq:formula:diff_talpha_upper}, and
  $M_1(T,a,b,C)$ can be rewritten as
  \begin{equation}\label{eq:M_1}
    \begin{aligned}
      M_1(T,a,b,C)&= [aT^2 (aT-bT+b)^2]C\\
                  &\;\;\;- [ (2a^2-2a+b) T^2+ 2(b-a)T+b ]\\
                  &=: \alpha_1(T)C - \alpha_0(T).
    \end{aligned}
  \end{equation}
  \end{subequations}
  Clearly, $\alpha_1(T)$ and $\alpha_0(T)$ are positive for all $0<T<1$.
  By choosing $C_1= \frac{\alpha_0(\gamma)}{\alpha_1(\gamma)}$, we have that
  \begin{equation}\label{eq:diff_sign_talpha_gamma}
        \talpha'(\gamma)
            \left\{
                \begin{array}{cl}
                    < 0 & \text{if } C< C_1, \\
                    = 0 & \text{if } C= C_1, \\
                    > 0 & \text{if } C> C_1. \\
                \end{array}
            \right.
  \end{equation}
  In fact, we can obtain the explicit form of $C_1$, as given in \eqref{eq:defi_C_1} by using the Maple.

  We next show \cref{prop:defi_C1_C_12}\cref{item:existence_C12}. By solving $\talpha(T)=1$ directly, we have that
  $T=1$ or
  \[
    T= \frac{(abC-2a+1) \pm \sqrt{L(a,b,C)}}{2a(b-a)C}=: \hat{T},
  \]
  where
  \[
    L(a,b,C)= (a^2b^2) C^2+ (-4a^2b+4a^2-2ab)C+ (2a-1)^2.
  \]
  Hence, for equation $\talpha(T)=1$ to have exactly one solution on $\mathbb{R}$, we must have $L(a,b,C)=0$
  or, equivalently,
  \[
        C= \frac{ (2ab-2a+b)\pm 2 \sqrt{a (2b-1) (b-a)}}{ab^2} =: C_{\pm}.
  \]
  We are then able to derive the following results.
  \begin{enumerate}[label=(\alph*),ref=(\alph*)]
    \item For $C=C_{-}$, we have that $\hat{T}= \frac{(abC_{-}-2a+1)}{2a(b-a)C_{-}} <0$
          by some tedious computation.
          It implies that $\talpha(T)$ is either greater or less than $1$ on the entire interval $(0,1)$.
          In fact, $\talpha(T)>1$ on $(0,1)$ since $\lim_{T\rightarrow 0^+} \talpha(T)= \infty$
          by \cref{prop:diff_talpha_endpt}\cref{item:talpha_endpt}.
    \item For $C< C_{-}$, we have, via direct computation, that
          \begin{equation}\label{eq:diff_talpha_C}
                \frac{\partial}{\partial C} \talpha(T) = - \frac{(1-T)[2aT+(1-T)]}{aC^2T[aT+b(1-T)]} < 0
          \end{equation}
          whenever $0<T<1$.
          This implies that $\left[ \talpha(T)  \right]_{C} > \left[ \talpha(T) \right]_{C=C_-} > 1$ for $C< C_-$.
    \item For $C\in (C_-, C_+)$, since $L(a,b,C)<0$, equation $\talpha(T)=1$ has no solution on $\mathbb{R}$ and hence
          $[\talpha(T)]_{C}>1$ on $(0,1)$.
    \item For $C=C_{+}$, we have that $\hat{T}= \frac{(abC_{+}-2a+1)}{2a(b-a)C_{+}} \in (0,1)$
          by some tedious computation.
          Moreover, since $[\talpha(T)]_{C}>1$ on $(0,1)$ for $C< C_{+}$ as claimed above,
          we have that  $[\talpha(T)]_{C_{+}} \geq 1$ on $(0,1)$ and the equality holds if and only if $T=\hat{T}$.
  \end{enumerate}
  Hence, the proof of \cref{prop:defi_C1_C_12}\cref{item:existence_C12} is complete by choosing $C_{12}= C_{+}$.

  Finally, we show \cref{prop:defi_C1_C_12}\cref{item:relation_C1_C12_C2}.
  As in the proof of part \cref{item:existence_C12}, we have that
  $\talpha(T) > 1$ on $(0,1)$ whenever $C<C_{12}$; while
  $\talpha(T) \geq 1$ on $(0,1)$ and there exists $T_*$ on $(0,1)$ such that $\talpha(T_*)= 1$ whenever $C=C_{12}$;
  $\talpha(T) < 1$ for some point on $(0,1)$ whenever $C>C_{12}$ by \eqref{eq:property_C12} and \eqref{eq:diff_talpha_C}.
  Then, by \eqref{eq:inflection_pt_talpha} and \eqref{eq:property_C1}, we have that, when $C=C_1$,
  $\talpha(T)$ is strictly decreasing on $(0,1)$ and hence $\talpha(T)> \talpha(1)\, (=1)$ on $(0,1)$.
  So $C_1< C_{12}$.
  On the other hand, since $\talpha'(1)=0$ by \cref{prop:diff_talpha_endpt}\cref{item:dtalpha_endpt}
  and $\talpha''(1)<0$ by \eqref{eq:inflection_pt_talpha}, we have that $\talpha'(T)<0$ and hence
  $\talpha(T)< \talpha(1)\, (=1)$ whenever $T$ is sufficiently close
  to $1^-$. So $C_2> C_{12}$.

  We have just completed the proof of \cref{prop:defi_C1_C_12}.
\end{proof}

\begin{proposition}\label{prop:talpha_property_various_C}
  For any $a\geq 1$ and $b> b_c (:= a+1/2)$, the following two assertions hold.
      \begin{enumerate}[label=(\roman*),ref=(\roman*)]
        \item\label{item:case_b_larger_b_c_and_C_less_C_1}
             If $C<C_1$, then $\talpha(T)$ is strictly decreasing.
        \item\label{item:case_b_larger_b_c_and_C_in_C_1_C_12}
            If $C\in (C_1, C_{12})$, then $\talpha(T)$ has a unique local minimum and
            a unique local maximum on $(0,1)$. The graph of such $\talpha(T)$ is to be called an $S$-shaped.
            Moreover, we have that $\talpha(T)> 1$ for all $0<T<1$.
        \item\label{item:case_b_larger_b_c_and_C_in_C_12_C_2}
            If $C\in (C_{12}, C_2)$, then the graph of $\talpha(T)$ is also $S$-shaped with $\talpha(T)<1$.
            for some $T$ on $(0,1)$.
        \item\label{item:case_b_larger_b_c_and_C_larger_C_2}
            If $C> C_2$, then $\talpha(T)$ has a unique critical point. The graph of such $\talpha(T)$ is to be termed
            a $C$-shaped.
    \end{enumerate}
\end{proposition}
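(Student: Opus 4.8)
The plan is to reduce the entire classification to the sign behavior of a single function, the first derivative $\talpha'$, whose global profile is already pinned down by the facts established earlier. Since $\talpha'''(T)<0$ on $(0,1]$ by \cref{prop:diff_talpha_endpt}\cref{item:dddtalpha_endpt}, the second derivative $\talpha''$ is strictly decreasing; together with $\lim_{T\to 0^+}\talpha''(T)=+\infty$ and $\talpha''(1)<0$ (the latter because $b>b_c$, via \cref{prop:diff_talpha_endpt}\cref{item:ddtalpha_endpt}), this forces the unique inflection point $\gamma$ described in \eqref{eq:inflection_pt_talpha}. Hence $\talpha'$ is strictly increasing on $(0,\gamma)$ and strictly decreasing on $(\gamma,1]$, so $\talpha'$ is \emph{unimodal} with its single interior maximum attained exactly at $\gamma$. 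The whole proof will then follow by reading off the sign of $\talpha'$ at three locations: the left end, where $\lim_{T\to 0^+}\talpha'(T)=-\infty<0$ always; the right end, where $\talpha'(1)=1-2/(aC)$ is negative, zero, or positive according as $C<C_2$, $C=C_2$, or $C>C_2$; and the peak, where the sign of $\talpha'(\gamma)$ is governed by $C_1$ through \eqref{eq:diff_sign_talpha_gamma}.

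With this mechanism in place I would run the three regimes. If $C<C_1$, then $\talpha'(\gamma)<0$, and since $\gamma$ is the global maximum of $\talpha'$ on $(0,1]$ we get $\talpha'<0$ throughout, so $\talpha$ is strictly decreasing, which is \cref{item:case_b_larger_b_c_and_C_less_C_1}. If $C_1<C<C_2$, then $\talpha'(\gamma)>0$ while both $\lim_{T\to 0^+}\talpha'(T)<0$ and $\talpha'(1)<0$; applying the intermediate value theorem on $(0,\gamma)$ and on $(\gamma,1)$, together with the strict monotonicity of $\talpha'$ on each of these subintervals, produces exactly one zero $T_*\in(0,\gamma)$ at which $\talpha'$ switches from $-$ to $+$ (a strict local minimum) and exactly one zero $T^*\in(\gamma,1)$ at which $\talpha'$ switches from $+$ to $-$ (a strict local maximum). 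This is precisely the $S$-shape common to parts \cref{item:case_b_larger_b_c_and_C_in_C_1_C_12,item:case_b_larger_b_c_and_C_in_C_12_C_2}. If $C>C_2$, then again $\talpha'(\gamma)>0$ but now $\talpha'(1)>0$, so the only sign change of $\talpha'$ is the single $-$ to $+$ crossing on $(0,\gamma)$; this yields a unique critical point, a minimum, giving the $C$-shape of \cref{item:case_b_larger_b_c_and_C_larger_C_2}.

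It remains to separate the two $S$-shaped regimes by the height of $\talpha$ relative to $1$. For $C\in(C_1,C_{12})$ I would invoke \cref{prop:defi_C1_C_12}\cref{item:existence_C12}, which guarantees that $\talpha(T)=1$ has no root on $(0,1)$ when $C<C_{12}$; since $\lim_{T\to 0^+}\talpha(T)=\infty$ and $\talpha$ is continuous, the absence of a root forces $\talpha>1$ on all of $(0,1)$, establishing \cref{item:case_b_larger_b_c_and_C_in_C_1_C_12}. For $C\in(C_{12},C_2)$ I would use the strict $C$-monotonicity $\partial_C\talpha(T)<0$ from \eqref{eq:diff_talpha_C}: at $C=C_{12}$ there is a point $\hat T\in(0,1)$ with $\talpha(\hat T)=1$, and increasing $C$ past $C_{12}$ strictly lowers the value at $\hat T$, so $\talpha(\hat T)<1$, which gives \cref{item:case_b_larger_b_c_and_C_in_C_12_C_2}.

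The argument is at heart a bookkeeping of signs, so the main obstacle is not a single sharp estimate but rather ensuring that every critical point produced is simultaneously strict and unique; this is exactly what the unimodality of $\talpha'$ supplies, and it is the reason the earlier sign computation of $\talpha'''$ in \cref{prop:diff_talpha_endpt}\cref{item:dddtalpha_endpt} carries the real analytic weight. The one place demanding care is the boundary bookkeeping at $T=1$ in the $S$-shaped case: one must verify that $T^*$ lies strictly inside $(0,1)$ rather than collapsing onto the endpoint, and this is precisely secured by the strict inequality $\talpha'(1)<0$ available throughout $C<C_2$.
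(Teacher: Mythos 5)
Your proposal is correct and follows essentially the same route as the paper's proof: both rest on the unimodality of $\talpha'$ forced by $\talpha'''<0$ and the inflection point $\gamma$ of \eqref{eq:inflection_pt_talpha}, read off the sign of $\talpha'$ at $0^+$, at $\gamma$ (via \eqref{eq:diff_sign_talpha_gamma} and $C_1$), and at $T=1$ (via $C_2$), and separate the two $S$-shaped regimes using \cref{prop:defi_C1_C_12}\cref{item:existence_C12} together with the monotonicity $\partial_C\talpha<0$ from \eqref{eq:diff_talpha_C}. The only cosmetic difference is that in part \cref{item:case_b_larger_b_c_and_C_less_C_1} you argue from $\talpha'(\gamma)<0$ plus the global maximality of $\talpha'$ at $\gamma$, while the paper pushes $M_1$ down in $C$ from the case $C=C_1$; these are the same mechanism.
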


\begin{proof}
  We first prove \cref{prop:talpha_property_various_C}\cref{item:case_b_larger_b_c_and_C_less_C_1}.
  By \eqref{eq:inflection_pt_talpha} and \eqref{eq:property_C1}, we have that, when $C=C_1$,
  $\talpha'(T) \leq 0$ on $(0,1)$. Then, by \eqref{eq:diff_talpha_M_1}, we have that $\talpha'(T) < 0$ on $(0,1)$
  if $C<C_1$.
  So the assertion of part \cref{item:case_b_larger_b_c_and_C_less_C_1} holds.

  We next prove
  \cref{prop:talpha_property_various_C}%
  \cref{{item:case_b_larger_b_c_and_C_in_C_1_C_12},item:case_b_larger_b_c_and_C_in_C_12_C_2}.
  For $C\in (C_1, C_2)$, we have that $\talpha'(\gamma)>0$ by \eqref{eq:diff_sign_talpha_gamma} and $\talpha'(1)<0$
  by \cref{prop:diff_talpha_endpt}\cref{item:dtalpha_endpt}. Then by \eqref{eq:inflection_pt_talpha},
  we conclude that $\talpha(T)$ has a unique local minimum and a unique local maximum on $(0,1)$.
  The remaining parts of the proof have been claimed in the proof of \cref{prop:defi_C1_C_12}\cref{item:relation_C1_C12_C2}.
  So the assertions of part \cref{{item:case_b_larger_b_c_and_C_in_C_1_C_12},item:case_b_larger_b_c_and_C_in_C_12_C_2} hold.

  Finally, we prove \cref{prop:talpha_property_various_C}\cref{item:case_b_larger_b_c_and_C_larger_C_2}.
  In fact, the assertion holds by \eqref{eq:inflection_pt_talpha} and
  since $\talpha'(\gamma)>0$ by \eqref{eq:diff_sign_talpha_gamma} and $\talpha'(1)>0$
  by \cref{prop:diff_talpha_endpt}\cref{item:dtalpha_endpt}.

  We have just completed the proof of \cref{prop:talpha_property_various_C}.
\end{proof}

\begin{acknowledgments}
    J.P.R. acknowledges support from Ministry of Science and Technology (Taiwan) and the FPU program of MECD (Spain), and thanks V\'ictor M. Egu\'iluz and Fakhteh Ghanbarnejad for comments and discussion. Y.H.L. and J.J. acknowledge support from the Ministry of Science and Technology of Taiwan under grant No. Most 105-2115-M-009-002-MY2.
\end{acknowledgments}


%

\end{document}